\let\mto\to                     
\let\to\relax                   
\newcommand{\to}{\rightarrow}
\newcommand{\rto}{\leftharpoonup}
\newcommand{\lto}{\rightharpoonup}
\newcommand{\LB}{\text{L}_!}
\newcommand{\LE}{\text{L}_\kappa}
\newcommand{\redto}{\rightsquigarrow}
\newcommand{\cat}[1]{\mathcal{#1}}
\newcommand{\Set}{\mathsf{Set}}
\newcommand{\Dial}[2]{\mathsf{Dial}_{#1}(#2)}
\newcommand{\Hom}[3]{\mathsf{Hom}_{#1}(#2,#3)}
\newcommand{\Ldrule}[4][]{{\displaystyle\frac{\begin{array}{l}#2\end{array}}{#3}\quad\Ldrulename{#4}}}
\newcommand{\Lpremise}[1]{ #1 \\}
\newenvironment{Ldefnblock}[3][]{ \framebox{\mbox{#2}} \quad #3 \\[0pt]}{}
\newcommand{\Lnt}[1]{\mathit{#1}}
\newcommand{\Lmv}[1]{\mathit{#1}}
\newcommand{\Lsym}[1]{#1}
\newcommand{\Ldrulename}[1]{\textsc{#1}}
\newcommand{\LdruleaxName}[0]{\Ldrulename{ax}}
\newcommand{\Ldruleax}[1]{\Ldrule[#1]{%
}{
 \Lnt{A}  \vdash  \Lnt{A} }{%
{\LdruleaxName}{}%
}}
\newcommand{\LdruleUrName}[0]{\Ldrulename{Ur}}
\newcommand{\LdruleUr}[1]{\Ldrule[#1]{%
}{
  \cdot   \vdash   I  }{%
{\LdruleUrName}{}%
}}
\newcommand{\LdrulecutName}[0]{\Ldrulename{cut}}
\newcommand{\Ldrulecut}[1]{\Ldrule[#1]{%
\Lpremise{  \Gamma_{{\mathrm{2}}}  \vdash  \Lnt{A}   \qquad   \Gamma_{{\mathrm{1}}}  \Lsym{,}  \Lnt{A}  \Lsym{,}  \Gamma_{{\mathrm{3}}}  \vdash  \Lnt{B}  }%
}{
 \Gamma_{{\mathrm{1}}}  \Lsym{,}  \Gamma_{{\mathrm{2}}}  \Lsym{,}  \Gamma_{{\mathrm{3}}}  \vdash  \Lnt{B} }{%
{\LdrulecutName}{}%
}}
\newcommand{\LdruleUlName}[0]{\Ldrulename{Ul}}
\newcommand{\LdruleUl}[1]{\Ldrule[#1]{%
\Lpremise{ \Gamma_{{\mathrm{1}}}  \Lsym{,}  \Gamma_{{\mathrm{2}}}  \vdash  \Lnt{A} }%
}{
 \Gamma_{{\mathrm{1}}}  \Lsym{,}   I   \Lsym{,}  \Gamma_{{\mathrm{2}}}  \vdash  \Lnt{A} }{%
{\LdruleUlName}{}%
}}
\newcommand{\LdruleTlName}[0]{\Ldrulename{Tl}}
\newcommand{\LdruleTl}[1]{\Ldrule[#1]{%
\Lpremise{ \Gamma  \Lsym{,}  \Lnt{A}  \Lsym{,}  \Lnt{B}  \Lsym{,}  \Gamma'  \vdash  \Lnt{C} }%
}{
 \Gamma  \Lsym{,}   \Lnt{A}  \otimes  \Lnt{B}   \Lsym{,}  \Gamma'  \vdash  \Lnt{C} }{%
{\LdruleTlName}{}%
}}
\newcommand{\LdruleTrName}[0]{\Ldrulename{Tr}}
\newcommand{\LdruleTr}[1]{\Ldrule[#1]{%
\Lpremise{  \Gamma_{{\mathrm{1}}}  \vdash  \Lnt{A}   \qquad   \Gamma_{{\mathrm{2}}}  \vdash  \Lnt{B}  }%
}{
 \Gamma_{{\mathrm{1}}}  \Lsym{,}  \Gamma_{{\mathrm{2}}}  \vdash   \Lnt{A}  \otimes  \Lnt{B}  }{%
{\LdruleTrName}{}%
}}
\newcommand{\LdruleIRlName}[0]{\Ldrulename{IRl}}
\newcommand{\LdruleIRl}[1]{\Ldrule[#1]{%
\Lpremise{  \Gamma_{{\mathrm{2}}}  \vdash  \Lnt{A}   \qquad   \Gamma_{{\mathrm{1}}}  \Lsym{,}  \Lnt{B}  \Lsym{,}  \Gamma_{{\mathrm{3}}}  \vdash  \Lnt{C}  }%
}{
 \Gamma_{{\mathrm{1}}}  \Lsym{,}   \Lnt{A}  \rightharpoonup  \Lnt{B}   \Lsym{,}  \Gamma_{{\mathrm{2}}}  \Lsym{,}  \Gamma_{{\mathrm{3}}}  \vdash  \Lnt{C} }{%
{\LdruleIRlName}{}%
}}
\newcommand{\LdruleILlName}[0]{\Ldrulename{ILl}}
\newcommand{\LdruleILl}[1]{\Ldrule[#1]{%
\Lpremise{  \Gamma_{{\mathrm{2}}}  \vdash  \Lnt{A}   \qquad   \Gamma_{{\mathrm{1}}}  \Lsym{,}  \Lnt{B}  \Lsym{,}  \Gamma_{{\mathrm{3}}}  \vdash  \Lnt{C}  }%
}{
 \Gamma_{{\mathrm{1}}}  \Lsym{,}  \Gamma_{{\mathrm{2}}}  \Lsym{,}   \Lnt{B}  \leftharpoonup  \Lnt{A}   \Lsym{,}  \Gamma_{{\mathrm{3}}}  \vdash  \Lnt{C} }{%
{\LdruleILlName}{}%
}}
\newcommand{\LdruleIRrName}[0]{\Ldrulename{IRr}}
\newcommand{\LdruleIRr}[1]{\Ldrule[#1]{%
\Lpremise{ \Gamma  \Lsym{,}  \Lnt{A}  \vdash  \Lnt{B} }%
}{
 \Gamma  \vdash   \Lnt{A}  \rightharpoonup  \Lnt{B}  }{%
{\LdruleIRrName}{}%
}}
\newcommand{\LdruleILrName}[0]{\Ldrulename{ILr}}
\newcommand{\LdruleILr}[1]{\Ldrule[#1]{%
\Lpremise{ \Lnt{A}  \Lsym{,}  \Gamma  \vdash  \Lnt{B} }%
}{
 \Gamma  \vdash   \Lnt{B}  \leftharpoonup  \Lnt{A}  }{%
{\LdruleILrName}{}%
}}
\newcommand{\LdruleCName}[0]{\Ldrulename{C}}
\newcommand{\LdruleC}[1]{\Ldrule[#1]{%
\Lpremise{ \Gamma_{{\mathrm{1}}}  \Lsym{,}   !  \Lnt{A}   \Lsym{,}  \Gamma_{{\mathrm{2}}}  \Lsym{,}   !  \Lnt{A}   \Lsym{,}  \Gamma_{{\mathrm{3}}}  \vdash  \Lnt{B} }%
}{
 \Gamma_{{\mathrm{1}}}  \Lsym{,}   !  \Lnt{A}   \Lsym{,}  \Gamma_{{\mathrm{2}}}  \Lsym{,}  \Gamma_{{\mathrm{3}}}  \vdash  \Lnt{B} }{%
{\LdruleCName}{}%
}}
\newcommand{\LdruleWName}[0]{\Ldrulename{W}}
\newcommand{\LdruleW}[1]{\Ldrule[#1]{%
\Lpremise{ \Gamma_{{\mathrm{1}}}  \Lsym{,}  \Gamma_{{\mathrm{2}}}  \vdash  \Lnt{B} }%
}{
 \Gamma_{{\mathrm{1}}}  \Lsym{,}   !  \Lnt{A}   \Lsym{,}  \Gamma_{{\mathrm{2}}}  \vdash  \Lnt{B} }{%
{\LdruleWName}{}%
}}
\newcommand{\LdruleBrName}[0]{\Ldrulename{Br}}
\newcommand{\LdruleBr}[1]{\Ldrule[#1]{%
\Lpremise{  !  \Gamma   \vdash  \Lnt{B} }%
}{
  !  \Gamma   \vdash   !  \Lnt{B}  }{%
{\LdruleBrName}{}%
}}
\newcommand{\LdruleBlName}[0]{\Ldrulename{Bl}}
\newcommand{\LdruleBl}[1]{\Ldrule[#1]{%
\Lpremise{ \Gamma_{{\mathrm{1}}}  \Lsym{,}  \Lnt{A}  \Lsym{,}  \Gamma_{{\mathrm{2}}}  \vdash  \Lnt{B} }%
}{
 \Gamma  \Lsym{,}   !  \Lnt{A}   \Lsym{,}  \Gamma_{{\mathrm{2}}}  \vdash  \Lnt{B} }{%
{\LdruleBlName}{}%
}}
\newcommand{\LdruleErName}[0]{\Ldrulename{Er}}
\newcommand{\LdruleEr}[1]{\Ldrule[#1]{%
\Lpremise{  \kappa  \Gamma   \vdash  \Lnt{B} }%
}{
  \kappa  \Gamma   \vdash   \kappa  \Lnt{B}  }{%
{\LdruleErName}{}%
}}
\newcommand{\LdruleElName}[0]{\Ldrulename{El}}
\newcommand{\LdruleEl}[1]{\Ldrule[#1]{%
\Lpremise{ \Gamma_{{\mathrm{1}}}  \Lsym{,}  \Lnt{A}  \Lsym{,}  \Gamma_{{\mathrm{2}}}  \vdash  \Lnt{B} }%
}{
 \Gamma_{{\mathrm{1}}}  \Lsym{,}   \kappa  \Lnt{A}   \Lsym{,}  \Gamma_{{\mathrm{2}}}  \vdash  \Lnt{B} }{%
{\LdruleElName}{}%
}}
\newcommand{\LdruleEOneName}[0]{\Ldrulename{E1}}
\newcommand{\LdruleEOne}[1]{\Ldrule[#1]{%
\Lpremise{ \Gamma_{{\mathrm{1}}}  \Lsym{,}   \kappa  \Lnt{A}   \Lsym{,}  \Lnt{B}  \Lsym{,}  \Gamma_{{\mathrm{2}}}  \vdash  \Lnt{C} }%
}{
 \Gamma_{{\mathrm{1}}}  \Lsym{,}  \Lnt{B}  \Lsym{,}   \kappa  \Lnt{A}   \Lsym{,}  \Gamma_{{\mathrm{2}}}  \vdash  \Lnt{C} }{%
{\LdruleEOneName}{}%
}}
\newcommand{\LdruleETwoName}[0]{\Ldrulename{E2}}
\newcommand{\LdruleETwo}[1]{\Ldrule[#1]{%
\Lpremise{ \Gamma_{{\mathrm{1}}}  \Lsym{,}  \Lnt{A}  \Lsym{,}   \kappa  \Lnt{B}   \Lsym{,}  \Gamma_{{\mathrm{2}}}  \vdash  \Lnt{C} }%
}{
 \Gamma_{{\mathrm{1}}}  \Lsym{,}   \kappa  \Lnt{B}   \Lsym{,}  \Lnt{A}  \Lsym{,}  \Gamma_{{\mathrm{2}}}  \vdash  \Lnt{C} }{%
{\LdruleETwoName}{}%
}}
\newcommand{\LdruleTXXvarName}[0]{\Ldrulename{T\_var}}
\newcommand{\LdruleTXXvar}[1]{\Ldrule[#1]{%
}{
 \Lmv{x}  \Lsym{:}  \Lnt{A}  \vdash  \Lmv{x}  :  \Lnt{A} }{%
{\LdruleTXXvarName}{}%
}}
\newcommand{\LdruleTXXUrName}[0]{\Ldrulename{T\_Ur}}
\newcommand{\LdruleTXXUr}[1]{\Ldrule[#1]{%
}{
  \cdot   \vdash   \mathsf{unit}   :   I  }{%
{\LdruleTXXUrName}{}%
}}
\newcommand{\LdruleTXXcutName}[0]{\Ldrulename{T\_cut}}
\newcommand{\LdruleTXXcut}[1]{\Ldrule[#1]{%
\Lpremise{  \Gamma_{{\mathrm{2}}}  \vdash  \Lnt{t_{{\mathrm{1}}}}  :  \Lnt{A}   \qquad   \Gamma_{{\mathrm{1}}}  \Lsym{,}  \Lmv{x}  \Lsym{:}  \Lnt{A}  \Lsym{,}  \Gamma_{{\mathrm{3}}}  \vdash  \Lnt{t_{{\mathrm{2}}}}  :  \Lnt{B}  }%
}{
 \Gamma_{{\mathrm{1}}}  \Lsym{,}  \Gamma_{{\mathrm{2}}}  \Lsym{,}  \Gamma_{{\mathrm{3}}}  \vdash  \Lsym{[}  \Lnt{t_{{\mathrm{1}}}}  \Lsym{/}  \Lmv{x}  \Lsym{]}  \Lnt{t_{{\mathrm{2}}}}  :  \Lnt{B} }{%
{\LdruleTXXcutName}{}%
}}
\newcommand{\LdruleTXXUlName}[0]{\Ldrulename{T\_Ul}}
\newcommand{\LdruleTXXUl}[1]{\Ldrule[#1]{%
\Lpremise{ \Gamma_{{\mathrm{1}}}  \Lsym{,}  \Gamma_{{\mathrm{2}}}  \vdash  \Lnt{t}  :  \Lnt{A} }%
}{
 \Gamma_{{\mathrm{1}}}  \Lsym{,}  \Lmv{x}  \Lsym{:}   I   \Lsym{,}  \Gamma_{{\mathrm{2}}}  \vdash   \mathsf{let}\, \Lmv{x} \,\mathsf{be}\,  \mathsf{unit}  \,\mathsf{in}\, \Lnt{t}   :  \Lnt{A} }{%
{\LdruleTXXUlName}{}%
}}
\newcommand{\LdruleTXXTlName}[0]{\Ldrulename{T\_Tl}}
\newcommand{\LdruleTXXTl}[1]{\Ldrule[#1]{%
\Lpremise{ \Gamma  \Lsym{,}  \Lmv{x}  \Lsym{:}  \Lnt{A}  \Lsym{,}  \Lmv{y}  \Lsym{:}  \Lnt{B}  \Lsym{,}  \Gamma'  \vdash  \Lnt{t}  :  \Lnt{C} }%
}{
 \Gamma  \Lsym{,}  \Lmv{z}  \Lsym{:}   \Lnt{A}  \otimes  \Lnt{B}   \Lsym{,}  \Gamma'  \vdash   \mathsf{let}\, \Lmv{z} \,\mathsf{be}\,  \Lmv{x}  \otimes  \Lmv{y}  \,\mathsf{in}\, \Lnt{t}   :  \Lnt{C} }{%
{\LdruleTXXTlName}{}%
}}
\newcommand{\LdruleTXXTrName}[0]{\Ldrulename{T\_Tr}}
\newcommand{\LdruleTXXTr}[1]{\Ldrule[#1]{%
\Lpremise{  \Gamma_{{\mathrm{1}}}  \vdash  \Lnt{t_{{\mathrm{1}}}}  :  \Lnt{A}   \qquad   \Gamma_{{\mathrm{2}}}  \vdash  \Lnt{t_{{\mathrm{2}}}}  :  \Lnt{B}  }%
}{
 \Gamma_{{\mathrm{1}}}  \Lsym{,}  \Gamma_{{\mathrm{2}}}  \vdash    \Lnt{t_{{\mathrm{1}}}}  \otimes  \Lnt{t_{{\mathrm{2}}}}    :    \Lnt{A}  \otimes  \Lnt{B}   }{%
{\LdruleTXXTrName}{}%
}}
\newcommand{\LdruleTXXIRlName}[0]{\Ldrulename{T\_IRl}}
\newcommand{\LdruleTXXIRl}[1]{\Ldrule[#1]{%
\Lpremise{  \Gamma_{{\mathrm{2}}}  \vdash  \Lnt{t_{{\mathrm{1}}}}  :  \Lnt{A}   \qquad   \Gamma_{{\mathrm{1}}}  \Lsym{,}  \Lmv{x}  \Lsym{:}  \Lnt{B}  \Lsym{,}  \Gamma_{{\mathrm{3}}}  \vdash  \Lnt{t_{{\mathrm{2}}}}  :  \Lnt{C}  }%
}{
 \Gamma_{{\mathrm{1}}}  \Lsym{,}  \Lmv{z}  \Lsym{:}   \Lnt{A}  \rightharpoonup  \Lnt{B}   \Lsym{,}  \Gamma_{{\mathrm{2}}}  \Lsym{,}  \Gamma_{{\mathrm{3}}}  \vdash  \Lsym{[}   \mathsf{app}_r\, \Lmv{z} \, \Lnt{t_{{\mathrm{1}}}}   \Lsym{/}  \Lmv{x}  \Lsym{]}  \Lnt{t_{{\mathrm{2}}}}  :  \Lnt{C} }{%
{\LdruleTXXIRlName}{}%
}}
\newcommand{\LdruleTXXILlName}[0]{\Ldrulename{T\_ILl}}
\newcommand{\LdruleTXXILl}[1]{\Ldrule[#1]{%
\Lpremise{  \Gamma_{{\mathrm{2}}}  \vdash  \Lnt{t_{{\mathrm{1}}}}  :  \Lnt{A}   \qquad   \Gamma_{{\mathrm{1}}}  \Lsym{,}  \Lmv{x}  \Lsym{:}  \Lnt{B}  \Lsym{,}  \Gamma_{{\mathrm{3}}}  \vdash  \Lnt{t_{{\mathrm{2}}}}  :  \Lnt{C}  }%
}{
 \Gamma_{{\mathrm{1}}}  \Lsym{,}  \Gamma_{{\mathrm{2}}}  \Lsym{,}  \Lmv{z}  \Lsym{:}   \Lnt{B}  \leftharpoonup  \Lnt{A}   \Lsym{,}  \Gamma_{{\mathrm{3}}}  \vdash  \Lsym{[}   \mathsf{app}_l\, \Lmv{z} \, \Lnt{t_{{\mathrm{1}}}}   \Lsym{/}  \Lmv{x}  \Lsym{]}  \Lnt{t_{{\mathrm{2}}}}  :  \Lnt{C} }{%
{\LdruleTXXILlName}{}%
}}
\newcommand{\LdruleTXXIlName}[0]{\Ldrulename{T\_Il}}
\newcommand{\LdruleTXXIl}[1]{\Ldrule[#1]{%
\Lpremise{  \Gamma_{{\mathrm{2}}}  \vdash  \Lnt{t_{{\mathrm{1}}}}  :  \Lnt{A}   \qquad   \Gamma_{{\mathrm{1}}}  \Lsym{,}  \Lmv{x}  \Lsym{:}  \Lnt{B}  \Lsym{,}  \Gamma_{{\mathrm{3}}}  \vdash  \Lnt{t_{{\mathrm{2}}}}  :  \Lnt{C}  }%
}{
 \Gamma_{{\mathrm{1}}}  \Lsym{,}  \Lmv{z}  \Lsym{:}   \Lnt{A}  \multimap  \Lnt{B}   \Lsym{,}  \Gamma_{{\mathrm{2}}}  \Lsym{,}  \Gamma_{{\mathrm{3}}}  \vdash  \Lsym{[}   \Lmv{z} \, \Lnt{t_{{\mathrm{1}}}}   \Lsym{/}  \Lmv{x}  \Lsym{]}  \Lnt{t_{{\mathrm{2}}}}  :  \Lnt{C} }{%
{\LdruleTXXIlName}{}%
}}
\newcommand{\LdruleTXXIRrName}[0]{\Ldrulename{T\_IRr}}
\newcommand{\LdruleTXXIRr}[1]{\Ldrule[#1]{%
\Lpremise{ \Gamma  \Lsym{,}  \Lmv{x}  \Lsym{:}  \Lnt{A}  \vdash  \Lnt{t}  :  \Lnt{B} }%
}{
 \Gamma  \vdash   \lambda_r  \Lmv{x} : \Lnt{A} . \Lnt{t}   :   \Lnt{A}  \rightharpoonup  \Lnt{B}  }{%
{\LdruleTXXIRrName}{}%
}}
\newcommand{\LdruleTXXILrName}[0]{\Ldrulename{T\_ILr}}
\newcommand{\LdruleTXXILr}[1]{\Ldrule[#1]{%
\Lpremise{ \Lmv{x}  \Lsym{:}  \Lnt{A}  \Lsym{,}  \Gamma  \vdash  \Lnt{t}  :  \Lnt{B} }%
}{
 \Gamma  \vdash   \lambda_l  \Lmv{x} : \Lnt{A} . \Lnt{t}   :   \Lnt{B}  \leftharpoonup  \Lnt{A}  }{%
{\LdruleTXXILrName}{}%
}}
\newcommand{\LdruleTXXIrName}[0]{\Ldrulename{T\_Ir}}
\newcommand{\LdruleTXXIr}[1]{\Ldrule[#1]{%
\Lpremise{ \Gamma  \Lsym{,}  \Lmv{x}  \Lsym{:}  \Lnt{A}  \vdash  \Lnt{t}  :  \Lnt{B} }%
}{
 \Gamma  \vdash   \lambda  \Lmv{x} : \Lnt{A} . \Lnt{t}   :   \Lnt{A}  \multimap  \Lnt{B}  }{%
{\LdruleTXXIrName}{}%
}}
\newcommand{\LdruleTXXCName}[0]{\Ldrulename{T\_C}}
\newcommand{\LdruleTXXC}[1]{\Ldrule[#1]{%
\Lpremise{ \Gamma_{{\mathrm{1}}}  \Lsym{,}  \Lmv{x}  \Lsym{:}   !  \Lnt{A}   \Lsym{,}  \Gamma_{{\mathrm{2}}}  \Lsym{,}  \Lmv{y}  \Lsym{:}   !  \Lnt{A}   \Lsym{,}  \Gamma_{{\mathrm{3}}}  \vdash  \Lnt{t}  :  \Lnt{B} }%
}{
 \Gamma_{{\mathrm{1}}}  \Lsym{,}  \Lmv{z}  \Lsym{:}   !  \Lnt{A}   \Lsym{,}  \Gamma_{{\mathrm{2}}}  \Lsym{,}  \Gamma_{{\mathrm{3}}}  \vdash   \mathsf{copy}\, \Lmv{z} \,\mathsf{as}\, \Lmv{x} , \Lmv{y} \,\mathsf{in}\, \Lnt{t}   :  \Lnt{B} }{%
{\LdruleTXXCName}{}%
}}
\newcommand{\LdruleTXXWName}[0]{\Ldrulename{T\_W}}
\newcommand{\LdruleTXXW}[1]{\Ldrule[#1]{%
\Lpremise{ \Gamma_{{\mathrm{1}}}  \Lsym{,}  \Gamma_{{\mathrm{2}}}  \vdash  \Lnt{t}  :  \Lnt{B} }%
}{
 \Gamma_{{\mathrm{1}}}  \Lsym{,}  \Lmv{x}  \Lsym{:}   !  \Lnt{A}   \Lsym{,}  \Gamma_{{\mathrm{2}}}  \vdash   \mathsf{discard}\, \Lmv{x} \,\mathsf{in}\, \Lnt{t}   :  \Lnt{B} }{%
{\LdruleTXXWName}{}%
}}
\newcommand{\LdruleTXXBrName}[0]{\Ldrulename{T\_Br}}
\newcommand{\LdruleTXXBr}[1]{\Ldrule[#1]{%
\Lpremise{  \overrightarrow{ \Lmv{x} } : ! \Gamma   \vdash  \Lnt{t}  :  \Lnt{B} }%
}{
  \overrightarrow{ \Lmv{y} } : ! \Gamma   \vdash   \mathsf{promote}_!\,  \overrightarrow{ \Lmv{y} }  \,\mathsf{for}\,  \overrightarrow{ \Lmv{x} }  \,\mathsf{in}\, \Lnt{t}   :   !  \Lnt{B}  }{%
{\LdruleTXXBrName}{}%
}}
\newcommand{\LdruleTXXBlName}[0]{\Ldrulename{T\_Bl}}
\newcommand{\LdruleTXXBl}[1]{\Ldrule[#1]{%
\Lpremise{ \Gamma_{{\mathrm{1}}}  \Lsym{,}  \Lmv{x}  \Lsym{:}  \Lnt{A}  \Lsym{,}  \Gamma_{{\mathrm{2}}}  \vdash  \Lnt{t}  :  \Lnt{B} }%
}{
 \Gamma_{{\mathrm{1}}}  \Lsym{,}  \Lmv{y}  \Lsym{:}   !  \Lnt{A}   \Lsym{,}  \Gamma_{{\mathrm{2}}}  \vdash  \Lsym{[}   \mathsf{derelict}_!\, \Lmv{y}   \Lsym{/}  \Lmv{x}  \Lsym{]}  \Lnt{t}  :  \Lnt{B} }{%
{\LdruleTXXBlName}{}%
}}
\newcommand{\LdruleTXXErName}[0]{\Ldrulename{T\_Er}}
\newcommand{\LdruleTXXEr}[1]{\Ldrule[#1]{%
\Lpremise{  \overrightarrow{ \Lmv{x} } : \kappa  \Gamma   \vdash  \Lnt{t}  :  \Lnt{B} }%
}{
  \overrightarrow{ \Lmv{y} } : \kappa  \Gamma   \vdash   \mathsf{promote}_\kappa\,  \overrightarrow{ \Lmv{y} }  \,\mathsf{for}\,  \overrightarrow{ \Lmv{x} }  \,\mathsf{in}\, \Lnt{t}   :   \kappa  \Lnt{B}  }{%
{\LdruleTXXErName}{}%
}}
\newcommand{\LdruleTXXElName}[0]{\Ldrulename{T\_El}}
\newcommand{\LdruleTXXEl}[1]{\Ldrule[#1]{%
\Lpremise{ \Gamma_{{\mathrm{1}}}  \Lsym{,}  \Lmv{x}  \Lsym{:}  \Lnt{A}  \Lsym{,}  \Gamma_{{\mathrm{2}}}  \vdash  \Lnt{t}  :  \Lnt{B} }%
}{
 \Gamma_{{\mathrm{1}}}  \Lsym{,}  \Lmv{y}  \Lsym{:}   \kappa  \Lnt{A}   \Lsym{,}  \Gamma_{{\mathrm{2}}}  \vdash  \Lsym{[}   \mathsf{derelict}_\kappa\, \Lmv{y}   \Lsym{/}  \Lmv{x}  \Lsym{]}  \Lnt{t}  :  \Lnt{B} }{%
{\LdruleTXXElName}{}%
}}
\newcommand{\LdruleTXXEOneName}[0]{\Ldrulename{T\_E1}}
\newcommand{\LdruleTXXEOne}[1]{\Ldrule[#1]{%
\Lpremise{ \Gamma_{{\mathrm{1}}}  \Lsym{,}  \Lmv{x_{{\mathrm{1}}}}  \Lsym{:}   \kappa  \Lnt{A}   \Lsym{,}  \Lmv{y_{{\mathrm{1}}}}  \Lsym{:}  \Lnt{B}  \Lsym{,}  \Gamma_{{\mathrm{2}}}  \vdash  \Lnt{t}  :  \Lnt{C} }%
}{
 \Gamma_{{\mathrm{1}}}  \Lsym{,}  \Lmv{y_{{\mathrm{2}}}}  \Lsym{:}  \Lnt{B}  \Lsym{,}  \Lmv{x_{{\mathrm{2}}}}  \Lsym{:}   \kappa  \Lnt{A}   \Lsym{,}  \Gamma_{{\mathrm{2}}}  \vdash   \mathsf{exchange_l}\, \Lmv{y_{{\mathrm{2}}}} , \Lmv{x_{{\mathrm{2}}}} \,\mathsf{with}\, \Lmv{x_{{\mathrm{1}}}} , \Lmv{y_{{\mathrm{1}}}} \,\mathsf{in}\, \Lnt{t}   :  \Lnt{C} }{%
{\LdruleTXXEOneName}{}%
}}
\newcommand{\LdruleTXXETwoName}[0]{\Ldrulename{T\_E2}}
\newcommand{\LdruleTXXETwo}[1]{\Ldrule[#1]{%
\Lpremise{ \Gamma_{{\mathrm{1}}}  \Lsym{,}  \Lmv{x_{{\mathrm{1}}}}  \Lsym{:}  \Lnt{A}  \Lsym{,}  \Lmv{y_{{\mathrm{1}}}}  \Lsym{:}   \kappa  \Lnt{B}   \Lsym{,}  \Gamma_{{\mathrm{2}}}  \vdash  \Lnt{t}  :  \Lnt{C} }%
}{
 \Gamma_{{\mathrm{1}}}  \Lsym{,}  \Lmv{y_{{\mathrm{2}}}}  \Lsym{:}   \kappa  \Lnt{B}   \Lsym{,}  \Lmv{x_{{\mathrm{2}}}}  \Lsym{:}  \Lnt{A}  \Lsym{,}  \Gamma_{{\mathrm{2}}}  \vdash   \mathsf{exchange_r}\, \Lmv{y_{{\mathrm{2}}}} , \Lmv{x_{{\mathrm{2}}}} \,\mathsf{with}\, \Lmv{x_{{\mathrm{1}}}} , \Lmv{y_{{\mathrm{1}}}} \,\mathsf{in}\, \Lnt{t}   :  \Lnt{C} }{%
{\LdruleTXXETwoName}{}%
}}
\newcommand{\LdruleRXXBetalName}[0]{\Ldrulename{R\_Betal}}
\newcommand{\LdruleRXXBetal}[1]{\Ldrule[#1]{%
}{
  \mathsf{app}_l\,  (   \lambda_l  \Lmv{x} : \Lnt{A} . \Lnt{t_{{\mathrm{2}}}}   )  \, \Lnt{t_{{\mathrm{1}}}}   \rightsquigarrow  \Lsym{[}  \Lnt{t_{{\mathrm{1}}}}  \Lsym{/}  \Lmv{x}  \Lsym{]}  \Lnt{t_{{\mathrm{2}}}} }{%
{\LdruleRXXBetalName}{}%
}}
\newcommand{\LdruleRXXBetarName}[0]{\Ldrulename{R\_Betar}}
\newcommand{\LdruleRXXBetar}[1]{\Ldrule[#1]{%
}{
  \mathsf{app}_r\,  (   \lambda_r  \Lmv{x} : \Lnt{A} . \Lnt{t_{{\mathrm{2}}}}   )  \, \Lnt{t_{{\mathrm{1}}}}   \rightsquigarrow  \Lsym{[}  \Lnt{t_{{\mathrm{1}}}}  \Lsym{/}  \Lmv{x}  \Lsym{]}  \Lnt{t_{{\mathrm{2}}}} }{%
{\LdruleRXXBetarName}{}%
}}
\newcommand{\LdruleRXXBetaUName}[0]{\Ldrulename{R\_BetaU}}
\newcommand{\LdruleRXXBetaU}[1]{\Ldrule[#1]{%
}{
  \mathsf{let}\, \Lnt{t_{{\mathrm{1}}}} \,\mathsf{be}\,  \mathsf{unit}  \,\mathsf{in}\, \Lsym{[}   \mathsf{unit}   \Lsym{/}  \Lmv{z}  \Lsym{]}  \Lnt{t_{{\mathrm{2}}}}   \rightsquigarrow  \Lsym{[}  \Lnt{t_{{\mathrm{1}}}}  \Lsym{/}  \Lmv{z}  \Lsym{]}  \Lnt{t_{{\mathrm{2}}}} }{%
{\LdruleRXXBetaUName}{}%
}}
\newcommand{\LdruleRXXBetaTOneName}[0]{\Ldrulename{R\_BetaT1}}
\newcommand{\LdruleRXXBetaTOne}[1]{\Ldrule[#1]{%
}{
  \mathsf{let}\,  \Lnt{t_{{\mathrm{1}}}}  \otimes  \Lnt{t_{{\mathrm{2}}}}  \,\mathsf{be}\,  \Lmv{x}  \otimes  \Lmv{y}  \,\mathsf{in}\, \Lnt{t}   \rightsquigarrow  \Lsym{[}  \Lnt{t_{{\mathrm{1}}}}  \Lsym{/}  \Lmv{x}  \Lsym{]}  \Lsym{[}  \Lnt{t_{{\mathrm{2}}}}  \Lsym{/}  \Lmv{y}  \Lsym{]}  \Lnt{t} }{%
{\LdruleRXXBetaTOneName}{}%
}}
\newcommand{\LdruleRXXBetaTTwoName}[0]{\Ldrulename{R\_BetaT2}}
\newcommand{\LdruleRXXBetaTTwo}[1]{\Ldrule[#1]{%
}{
  \mathsf{let}\, \Lnt{t_{{\mathrm{1}}}} \,\mathsf{be}\,  \Lmv{x}  \otimes  \Lmv{y}  \,\mathsf{in}\, \Lsym{[}   \Lmv{x}  \otimes  \Lmv{y}   \Lsym{/}  \Lmv{z}  \Lsym{]}  \Lnt{t_{{\mathrm{2}}}}   \rightsquigarrow  \Lsym{[}  \Lnt{t_{{\mathrm{1}}}}  \Lsym{/}  \Lmv{x}  \Lsym{]}  \Lnt{t_{{\mathrm{2}}}} }{%
{\LdruleRXXBetaTTwoName}{}%
}}
\newcommand{\LdruleRXXNatUName}[0]{\Ldrulename{R\_NatU}}
\newcommand{\LdruleRXXNatU}[1]{\Ldrule[#1]{%
}{
 \Lsym{[}   \mathsf{let}\, \Lnt{t_{{\mathrm{1}}}} \,\mathsf{be}\,  \mathsf{unit}  \,\mathsf{in}\, \Lnt{t_{{\mathrm{2}}}}   \Lsym{/}  \Lmv{z}  \Lsym{]}  \Lnt{t_{{\mathrm{3}}}}  \rightsquigarrow   \mathsf{let}\, \Lnt{t_{{\mathrm{1}}}} \,\mathsf{be}\,  \mathsf{unit}  \,\mathsf{in}\, \Lsym{[}  \Lnt{t_{{\mathrm{2}}}}  \Lsym{/}  \Lmv{z}  \Lsym{]}  \Lnt{t_{{\mathrm{3}}}}  }{%
{\LdruleRXXNatUName}{}%
}}
\newcommand{\LdruleRXXNatTName}[0]{\Ldrulename{R\_NatT}}
\newcommand{\LdruleRXXNatT}[1]{\Ldrule[#1]{%
}{
 \Lsym{[}   \mathsf{let}\, \Lnt{t_{{\mathrm{1}}}} \,\mathsf{be}\,  \Lmv{x}  \otimes  \Lmv{y}  \,\mathsf{in}\, \Lnt{t_{{\mathrm{2}}}}   \Lsym{/}  \Lmv{z}  \Lsym{]}  \Lnt{t_{{\mathrm{3}}}}  \rightsquigarrow   \mathsf{let}\, \Lnt{t_{{\mathrm{1}}}} \,\mathsf{be}\,  \Lmv{x}  \otimes  \Lmv{y}  \,\mathsf{in}\, \Lsym{[}  \Lnt{t_{{\mathrm{2}}}}  \Lsym{/}  \Lmv{z}  \Lsym{]}  \Lnt{t_{{\mathrm{3}}}}  }{%
{\LdruleRXXNatTName}{}%
}}
\newcommand{\LdruleRXXLetUName}[0]{\Ldrulename{R\_LetU}}
\newcommand{\LdruleRXXLetU}[1]{\Ldrule[#1]{%
}{
  \mathsf{let}\,  \mathsf{unit}  \,\mathsf{be}\,  \mathsf{unit}  \,\mathsf{in}\, \Lnt{t}   \rightsquigarrow  \Lnt{t} }{%
{\LdruleRXXLetUName}{}%
}}
\newcommand{\LdruleRXXBetaDRName}[0]{\Ldrulename{R\_BetaDR}}
\newcommand{\LdruleRXXBetaDR}[1]{\Ldrule[#1]{%
}{
  \mathsf{derelict}_!\,  (   \mathsf{promote}_!\,  \overrightarrow{ \Lnt{t} }  \,\mathsf{for}\,  \overrightarrow{ \Lmv{x} }  \,\mathsf{in}\, \Lnt{t_{{\mathrm{1}}}}   )    \rightsquigarrow  \Lsym{[}   \overrightarrow{ \Lnt{t} }   \Lsym{/}   \overrightarrow{ \Lmv{x} }   \Lsym{]}  \Lnt{t_{{\mathrm{1}}}} }{%
{\LdruleRXXBetaDRName}{}%
}}
\newcommand{\LdruleRXXBetaDIName}[0]{\Ldrulename{R\_BetaDI}}
\newcommand{\LdruleRXXBetaDI}[1]{\Ldrule[#1]{%
}{
  \mathsf{discard}\,  (   \mathsf{promote}_!\,  \overrightarrow{ \Lnt{t} }  \,\mathsf{for}\,  \overrightarrow{ \Lmv{x} }  \,\mathsf{in}\, \Lnt{t_{{\mathrm{1}}}}   )  \,\mathsf{in}\, \Lnt{t_{{\mathrm{2}}}}   \rightsquigarrow   \mathsf{discard}\,  \overrightarrow{ \Lnt{t} }  \,\mathsf{in}\, \Lnt{t_{{\mathrm{2}}}}  }{%
{\LdruleRXXBetaDIName}{}%
}}
\newcommand{\LdruleRXXBetaCName}[0]{\Ldrulename{R\_BetaC}}
\newcommand{\LdruleRXXBetaC}[1]{\Ldrule[#1]{%
\Lpremise{  \Lnt{t'_{{\mathrm{1}}}}  \Lsym{=}   \mathsf{promote}_!\,  \overrightarrow{ \Lmv{w} }  \,\mathsf{for}\,  \overrightarrow{ \Lmv{x} }  \,\mathsf{in}\, \Lnt{t_{{\mathrm{1}}}}    \qquad   \Lnt{t''_{{\mathrm{1}}}}  \Lsym{=}   \mathsf{promote}_!\,  \overrightarrow{ \Lmv{z} }  \,\mathsf{for}\,  \overrightarrow{ \Lmv{x} }  \,\mathsf{in}\, \Lnt{t_{{\mathrm{1}}}}   }%
}{
  \mathsf{copy}\,  (   \mathsf{promote}_!\,  \overrightarrow{ \Lnt{t} }  \,\mathsf{for}\,  \overrightarrow{ \Lmv{x} }  \,\mathsf{in}\, \Lnt{t_{{\mathrm{1}}}}   )  \,\mathsf{as}\, \Lmv{w} , \Lmv{z} \,\mathsf{in}\, \Lnt{t_{{\mathrm{2}}}}   \rightsquigarrow   \mathsf{copy}\,  \overrightarrow{ \Lnt{t} }  \,\mathsf{as}\,  \overrightarrow{ \Lmv{w} }  ,  \overrightarrow{ \Lmv{z} }  \,\mathsf{in}\, \Lsym{[}  \Lnt{t'_{{\mathrm{1}}}}  \Lsym{/}  \Lmv{w}  \Lsym{]}  \Lsym{[}  \Lnt{t''_{{\mathrm{1}}}}  \Lsym{/}  \Lmv{z}  \Lsym{]}  \Lnt{t_{{\mathrm{2}}}}  }{%
{\LdruleRXXBetaCName}{}%
}}
\newcommand{\LdruleRXXNatDName}[0]{\Ldrulename{R\_NatD}}
\newcommand{\LdruleRXXNatD}[1]{\Ldrule[#1]{%
}{
 \Lsym{[}   \mathsf{discard}\, \Lnt{t} \,\mathsf{in}\, \Lnt{t_{{\mathrm{1}}}}   \Lsym{/}  \Lmv{x}  \Lsym{]}  \Lnt{t_{{\mathrm{2}}}}  \rightsquigarrow   \mathsf{discard}\, \Lnt{t} \,\mathsf{in}\, \Lsym{[}  \Lnt{t_{{\mathrm{1}}}}  \Lsym{/}  \Lmv{x}  \Lsym{]}  \Lnt{t_{{\mathrm{2}}}}  }{%
{\LdruleRXXNatDName}{}%
}}
\newcommand{\LdruleRXXNatCName}[0]{\Ldrulename{R\_NatC}}
\newcommand{\LdruleRXXNatC}[1]{\Ldrule[#1]{%
}{
 \Lsym{[}   \mathsf{copy}\, \Lnt{t} \,\mathsf{as}\, \Lmv{x} , \Lmv{y} \,\mathsf{in}\, \Lnt{t_{{\mathrm{1}}}}   \Lsym{/}  \Lmv{x}  \Lsym{]}  \Lnt{t_{{\mathrm{2}}}}  \rightsquigarrow   \mathsf{copy}\, \Lnt{t} \,\mathsf{as}\, \Lmv{x} , \Lmv{y} \,\mathsf{in}\, \Lsym{[}  \Lnt{t_{{\mathrm{1}}}}  \Lsym{/}  \Lmv{x}  \Lsym{]}  \Lnt{t_{{\mathrm{2}}}}  }{%
{\LdruleRXXNatCName}{}%
}}
\newcommand{\LdruleRXXBetaEDRName}[0]{\Ldrulename{R\_BetaEDR}}
\newcommand{\LdruleRXXBetaEDR}[1]{\Ldrule[#1]{%
}{
  \mathsf{derelict}_\kappa\,  (   \mathsf{promote}_\kappa\,  \overrightarrow{ \Lnt{t} }  \,\mathsf{for}\,  \overrightarrow{ \Lmv{x} }  \,\mathsf{in}\, \Lnt{t_{{\mathrm{1}}}}   )    \rightsquigarrow  \Lsym{[}   \overrightarrow{ \Lnt{t} }   \Lsym{/}   \overrightarrow{ \Lmv{x} }   \Lsym{]}  \Lnt{t_{{\mathrm{1}}}} }{%
{\LdruleRXXBetaEDRName}{}%
}}
\newcommand{\LdruleRXXNatElName}[0]{\Ldrulename{R\_NatEl}}
\newcommand{\LdruleRXXNatEl}[1]{\Ldrule[#1]{%
}{
 \Lsym{[}   \mathsf{exchange_l}\, \Lnt{t_{{\mathrm{1}}}} , \Lnt{t_{{\mathrm{2}}}} \,\mathsf{with}\, \Lmv{x} , \Lmv{y} \,\mathsf{in}\, \Lnt{t_{{\mathrm{3}}}}   \Lsym{/}  \Lmv{z}  \Lsym{]}  \Lnt{t_{{\mathrm{4}}}}  \rightsquigarrow   \mathsf{exchange_l}\, \Lnt{t_{{\mathrm{1}}}} , \Lnt{t_{{\mathrm{2}}}} \,\mathsf{with}\, \Lmv{x} , \Lmv{y} \,\mathsf{in}\, \Lsym{[}  \Lnt{t_{{\mathrm{3}}}}  \Lsym{/}  \Lmv{z}  \Lsym{]}  \Lnt{t_{{\mathrm{4}}}}  }{%
{\LdruleRXXNatElName}{}%
}}
\newcommand{\LdruleRXXNatErName}[0]{\Ldrulename{R\_NatEr}}
\newcommand{\LdruleRXXNatEr}[1]{\Ldrule[#1]{%
}{
 \Lsym{[}   \mathsf{exchange_r}\, \Lnt{t_{{\mathrm{1}}}} , \Lnt{t_{{\mathrm{2}}}} \,\mathsf{with}\, \Lmv{x} , \Lmv{y} \,\mathsf{in}\, \Lnt{t_{{\mathrm{3}}}}   \Lsym{/}  \Lmv{z}  \Lsym{]}  \Lnt{t_{{\mathrm{4}}}}  \rightsquigarrow   \mathsf{exchange_r}\, \Lnt{t_{{\mathrm{1}}}} , \Lnt{t_{{\mathrm{2}}}} \,\mathsf{with}\, \Lmv{x} , \Lmv{y} \,\mathsf{in}\, \Lsym{[}  \Lnt{t_{{\mathrm{3}}}}  \Lsym{/}  \Lmv{z}  \Lsym{]}  \Lnt{t_{{\mathrm{4}}}}  }{%
{\LdruleRXXNatErName}{}%
}}
\renewcommand{\Ldrule}[4][]{{\displaystyle\frac{\begin{array}{l}#2\end{array}}{#3}\,\,\Ldrulename{{\scriptsize #4}}}}
\begin{document}

\title{Dialectica Categories for the Lambek Calculus}

\author{Valeria de Paiva\inst{1} and Harley Eades III\inst{2}}

\institute{Nuance Communications, Sunnyvale, CA, \email{valeria.depaiva@gmail.com}
  \and Computer Science, Augusta University, Augusta, GA, \email{harley.eades@gmail.com}}

\maketitle

\begin{abstract}
  We revisit the old work of de Paiva on the models of the Lambek
  Calculus in dialectica models making sure that the syntactic
  details that were sketchy on the first version got completed and
  verified.  We extend the Lambek Calculus with a $\kappa$ modality,
  inspired by Yetter's work, which makes the calculus
  commutative. Then we add the of-course modality $!$, as Girard did,
  to re-introduce weakening and contraction for all formulas and get
  back the full power of intuitionistic and classical logic. We also
  present the categorical semantics, proved sound and
  complete. Finally we show the traditional properties of type
  systems, like subject reduction, the Church-Rosser theorem and
  normalization for the calculi of extended modalities, which we did
  not have before.
  \keywords{Lambek calculus, dialectica models, categorical semantics, type
    theory, structural rules, non-commutative, linear logic}
\end{abstract}

\section*{Introduction}
Lambek introduced his homonymous calculus (originally called the
`Syntactic Calculus') for proposed applications in Linguistics.
However the calculus got much of its cult following and reputation by
being a convenient, well-behaved prototype of a Gentzen sequent
calculus without any structural rules.

This note recalls a Dialectica model of the Lambek Calculus presented
by the first author in the Amsterdam Colloquium in 1991
\cite{depaiva1991}. Here, like then, we approach the Lambek Calculus
from the perspective of Linear Logic, so we are interested in the
basic sequent calculus with no structural rules, except associativity
of tensors. In that earlier work we took for granted the syntax of the
calculus and only discussed the exciting possibilities of categorical
models of linear-logic-like systems.  Many years later we find that
the work on models is still interesting and novel, and that it might
inform some of the most recent work on the relationship between
categorial grammars and notions of distributional semantics
\cite{coecke2013}.


Moreover, using the Agda proof assistant \cite{bove2009} we verify the
correctness of our categorical model
(Section~\ref{sec:dialectica_lambek_spaces}), and we add the type
theoretical (Section~\ref{sec:typed_lambek_calculi}) notions that were
left undiscussed in the Amsterdam Colloquium presentation.  All of the
syntax in this paper was checked using \texttt{Ott}
\cite{Sewell:2010}.  The goal is to show that our work can shed new
light on some of the issues that remained open.  Mostly we wanted to
check the correctness of the semantic proposals put forward since
Szabo's seminal book \cite{szabo1978} and, for future work, on the
applicability and fit of the original systems to their intended uses.

\textbf{Overview.} The Syntactic Calculus was first introduced by
Joachim Lambek in 1958 \cite{Lambek1958}. Since then the rechristened
Lambek Calculus has had as its main motivation providing an
explanation of the mathematics of sentence structure, starting from
the author's algebraic intuitions. The Lambek Calculus is the core of
logical Categorial Grammar.  The first use of the term “categorial
grammar” seems to be in the title of Bar-Hillel, Gaifman and Shamir
(1960), but categorial grammar began with Ajdukiewicz (1935) quite a
few years earlier. After a period of ostracism, around 1980 the Lambek
Calculus was taken up by logicians interested in Computational
Linguistics, especially the ones interested in Categorial
Grammars. 

The work on Categorial Grammar was given a serious impulse by the
advent of Girard's Linear Logic at the end of the 1980s.  Girard
\cite{Girard:1987} showed that there is a full embedding, preserving
proofs, of Intuitionistic Logic into Linear Logic with a modality
``!". This meant that Linear Logic, while paying attention to
resources, could always code up faithfully Classical Logic and hence
one could, as Girard put it, `have one's cake and eat it', paying
attention to resources, if desired, but always forgetting this
accounting, if preferred. This meant also that several new systems of
resource logics were conceived and developed and these refined
resource logics were applied to several areas of Computer Science.

In Computational Linguistics, the Lambek Calculus has seen a
significant number of works written about it, apart from a number of
monographs that deal with logical and linguistic aspects of the
generalized type-logical approach.  For a shorter introduction, see
Moortgat's entry on the Stanford Encyclopedia of Philosophy on Type
Logical Grammar \cite{MoortgatSEP}.  Type Logical Grammar situates the
type-logical approach within the framework of Montague's Universal
Grammar and presents detailed linguistic analyses for a substantive
fragment of syntactic and semantic phenomena in the grammar of
English.  Type Logical Semantics offers a general introduction to
natural language semantics studied from a type-logical perspective.


This meant that a series of systems, implemented or not, were devised
that used the Lambek Calculus or its variants as their basis. These
systems can be as expressive as Intuitionistic Logic and the claim is
that they are more precise i.e. they make finer distinctions. Some of
the landscape of calculi can be depicted as follows:
\begin{center}
  \includegraphics[scale=0.30]{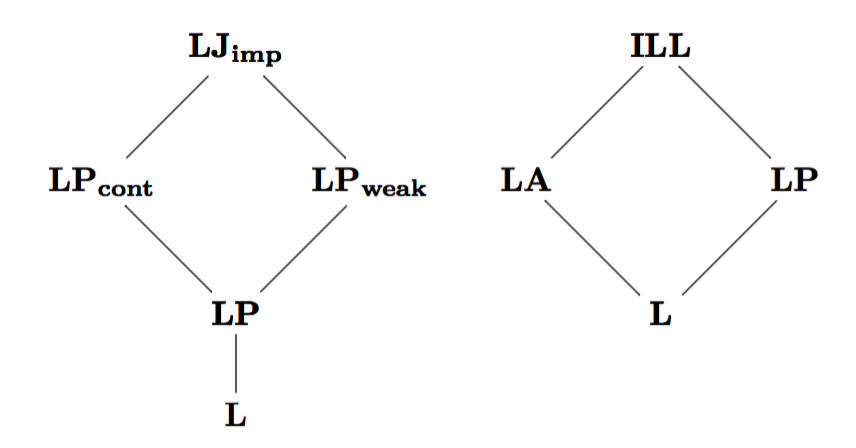}
\end{center}




From the beginning it was clear that the Lambek Calculus is the
multiplicative fragment of non-commutative Intuitionistic Linear
Logic. In the diagrams $\bf L$ stands for the Lambek Calculus, as
expounded in \cite{Lambek1958} but with the unit $I$ added for the
tensor connective (there was a certain amount of dispute on that, as
the original system did not introduce the constant corresponding to
the nullary case of the tensor product, here written as $I$). The
system $\bf{LP}$ is the Lambek Calculus with permutation, sometimes
called the van Benthem calculus. We are calling $\bf{LA}$ the Lambek
Calculus with additives, the more usual algebraic connectives
corresponding to meet and join. Hence adding both permutation and
additives to the Lambek Calculus we get to intuitonistic linear
logic. On the other diagram to the Lambek Calculus with permutation we
add either weakening ($\bf{LP_{weak}}$) or contraction
$\bf{LP_{cont}}$) or both to get the implicational fragment of
Intuitionsitic Propositional Logic.

The Lambek Calculus also has the potential for many applications in
other areas of computer science, such as, modeling processes.  Linear
Logic has been at the forefront of the study of process calculi for
many years \cite{HONDA20102223,Pratt:1997,ABRAMSKY19945}. We can think
of the commutative tensor product of linear logic as a parallel
operator.  For example, given a process $\Lnt{A}$ and a process $\Lnt{B}$,
then we can form the process $ \Lnt{A}  \otimes  \Lnt{B} $ which runs both processes
in parallel.  If we remove commutativity from the tensor product we
obtain a sequential composition instead of parallel composition.  That
is, the process $\Lnt{A} \rhd \Lnt{B}$ first runs process $\Lnt{A}$ and then
process $\Lnt{B}$ in that order.  Paraphrasing Vaughan Pratt, ``The
sequential composition operation has no evident counterpart in type
theory'' see page 11 of \cite{Pratt:1997}.  We believe that the Lambek
Calculus will lead to filling this hole, and the results of this paper
as a means of obtaining a theory with both a parallel operator and a
sequential composition operator.  This work thus has a potential to
impact research in programming languages and computer security where
both linear logic and sequential composition play important roles.

There are several interesting questions, considered for Linear Logic,
that could also be asked of the Lambek Calculus or its extensions.
One of them, posed by Morrill et al is whether we can extend the
Lambek Calculus with a modality that does for the structural rule of
\textit{(exchange)} what the modality \textit{of course} `!' does for
the rules of \textit{(weakening)} and \textit{(contraction)}.  A
preliminary proposal, which answers this question affirmatively, is
set forward in this paper. The answer was provided in semantical terms
in the first version of this work. Here we provide also the more
syntactic description of these modalities. Building up from work of
Ciabattoni, Galatos and Terui in \cite{Ciabattoni2012} and others that
describe how to transform systems of axioms into cut-free sequent
rules, we aim to refine the algebraization of proof theory.




\section{Related Work}
\label{sec:related_work}

Lamarche and Retor\'e \cite{Lamarche:1996} give an overview of proof
nets for the Lambek Calculus where they discuss the addition of
various exchange rules to the calculus.  For example, the following
permutation and cycle rules:
\[
\begin{array}{lll}
  $$\mprset{flushleft}
  \inferrule* [right=perm] {
    A_1, A_2, \Gamma \vdash B
  }{A_2, A_1, \Gamma \vdash B}
  & \quad &
  $$\mprset{flushleft}
  \inferrule* [right=cycl] {
    A_1, A_2, \Gamma \vdash B
  }{A_1, \Gamma, A_2, \Gamma \vdash B}
\end{array}
\]
Taken in isolation each rule does not imply general exchange, but
taken together they do.  Thus, it is possible to take a restricted
notion of exchange to the Lambek Calculus without taking general
exchange. However, applications where one needs a completely
non-commutative tensor product and a commutative tensor product cannot
be modeled in the Lambek Calculus with these rules.

Polakow and Pfenning~\cite{Polakow:2001} combine intuitionistic,
commutative linear, and non-commutative linear logic into a system
called Ordered Linear Logic (OLL).  Polakow and Pfenning then extend
OLL into two new systems: a term assignment of OLL called OLLi, and a
logical framework in the tradition of LF called OLF.

OLL's sequent is of the form $\Gamma, \Delta, \Omega \vdash A$ where
$\Gamma$ is a multiset of intuitionistic assumptions, $\Delta$ is a
multiset of commutative linear assumptions, and $\Omega$ is a list of
non-commutative linear assumptions.  Furthermore, OLL contains logical
connectives from each logic.  For example, there are two different
tensor products and three different implications.  Thus, the systems
developed here are a simplification of OLL showing how to get all of
these systems using modalities.

Greco and Palmigiano~\cite{2016arXiv161104181G} give a type of logic
called a proper display logic for the Lambek Calculus with
exponentials. However, they decompose the linear exponential into an
adjunction in the spirit Benton's LNL models.  In this paper we
concentrate on sequent calculi rather than display logic.

\section{The Lambek Calculus}

The Lambek Calculus, formerly the Syntactic Calculus $\sf L$, due to
J.  Lambek \cite{Lambek1958}, was created to capture the logical
structure of sentences.  Lambek introduced what we think of as a
substructural logic with an operator denoting concatenation,
$ \Lnt{A}  \otimes  \Lnt{B} $, and two implications relating the order of phrases,
$ \Lnt{A}  \leftharpoonup  \Lnt{B} $ and $ \Lnt{A}  \rightharpoonup  \Lnt{B} $.  The first implication corresponds to a
phrase of type $\Lnt{A}$ when followed by a phrase of type $\Lnt{B}$, and
the second is a phrase of type $\Lnt{B}$ when preceded by a phrase of
type $\Lnt{A}$.

The Lambek Calculus, defined in Figure~\ref{fig:L}, can be presented
as a non-commutative intuitionistic multiplicative linear logic.
Contexts are sequences of formulas, and we denote mapping the
modalities over an arbitrary context by $ !  \Gamma $ and $ \kappa  \Gamma $.
\begin{figure}
  \begin{mdframed}
    \begin{mathpar}
      \Ldruleax{} \and
      \LdruleUr{} \and      
    \Ldrulecut{} \and
    \LdruleUl{} \and
    \LdruleTl{} \and
    \LdruleTr{} \and
    \LdruleIRl{} \and
    \LdruleILl{} \and
    \LdruleIRr{} \and
    \LdruleILr{}     
  \end{mathpar}
  \end{mdframed}
    
  \caption{The Lambek Calculus: L}
  \label{fig:L}
\end{figure}

Because the operator $ \Lnt{A}  \otimes  \Lnt{B} $ denotes the type of concatenations
the types $ \Lnt{A}  \otimes  \Lnt{B} $ and $ \Lnt{B}  \otimes  \Lnt{A} $ are not equivalent, and
hence, \textsf{L} is non-commutative which explains why implication
must be broken up into two operators $ \Lnt{A}  \leftharpoonup  \Lnt{B} $ and $ \Lnt{A}  \rightharpoonup  \Lnt{B} $.
In the following subsections we give two extensions of L: one with the
well-known modality of-course of linear logic which adds weakening and
contraction, and a second with a new modality adding exchange.

\section{
Extensions to the Lambek Calculus}
\label{subsec:the_lambek_calculus_with_the_weakening_and_contraction_modality}

The linear modality, $ !  \Lnt{A} $, read ``of-course $\Lnt{A}$'' was first
proposed by Girard \cite{Girard:1987} as a means of encoding
non-linear logic in both classical and intuitionistic forms in linear
logic.  For example, non-linear implication $ \Lnt{A}  \rightharpoonup  \Lnt{B} $ is usually
encoded into linear logic by $ !  \Lnt{A}  \multimap \Lnt{B}$. Since we have
based L on non-commutative intuitionistic linear logic it is
straightforward to add the of-course modality to L.  The rules for the
of-course modality are defined by the following rules:
\[
\small
\begin{array}{ccccccccccc}
  \LdruleC{} & \quad & \LdruleW{} & \quad & 
  \LdruleBr{} & \quad & \LdruleBl{}
\end{array}
\]
The rules $\Ldrulename{C}$ and $\Ldrulename{W}$ add contraction and
weakening to L in a controlled way.  Then the other two rules allow
for linear formulas to be injected into the modality; and essentially
correspond to the rules for necessitation found in S4
\cite{bierman2000}.  Thus, under the of-course modality the logic
becomes non-linear. We will see in
Section~\ref{sec:dialectica_lambek_spaces} that these rules define a
comonad.  We call the extension of L with the of-course modality
$\LB$.

As we remarked above, one leading question of the Lambek Calculus is:
can exchange be added in a similar way to weakening and contraction?
That is, can we add a new modality that adds the exchange rule to L in
a controlled way?  The answer to this question is positive, and the
rules for this new modality are as follows:
\[
\small
\begin{array}{ccccccccccccccccccccc}  
  \LdruleEr{} & & \LdruleEl{} & & \LdruleEOne{} & & \LdruleETwo{} 
\end{array}
\]
The first two rules are similar to of-course, but the last two add
exchange to formulas under the $\kappa$-modality.  We call L with the
exchange modality $\LE$.  Thus, unlike intuitionistic linear logic
where any two formulas can be exchanged, $\LE$ restricts exchange to
only formulas under the exchange modality.  Just like of-course
the exchange modality is modeled categorically as a comonad; see
Section~\ref{sec:dialectica_lambek_spaces}.

\section{Categorical Models}

We now turn to the categorical models, ones where one considers
different proofs of the same theorem. Since the Lambek Calculus itself
came from its categorical models, biclosed monoidal categories, there
is no shortage of these models. However, Girard's insight of relating
logical systems via modalities should also be considered in this
context.

Lambek's work on monoidal biclosed categories happened almost three
decades before Girard introduced Linear Logic, hence there were no
modalities or exponentials in Lambek's setting. The categorical
modelling of the modalities (of-course! and why-not?) was the
difficult issue with Linear Logic.
This is where there are design decisions to be made. 


\subsection{Dialectica Lambek Spaces}
\label{sec:dialectica_lambek_spaces}

A sound and complete categorical model of  the Lambek Calculi
 can be given using a modification of de Paiva's
dialectica categories \cite{depaiva1990}.  Dialectica
categories arose from de Paiva's thesis on a categorical model
of G\"odel's Dialectica interpretation, hence the name.  Dialectica
categories were one of the first sound categorical models of
intuitionistic linear logic, with  linear modalities.  We show in
this section that they can be adapted to become a sound and complete model for the Lambek Calculus, with both the
exchange and of-course modalities. We call this model  \textit{dialectica Lambek spaces}.

Due to the complexities of working
with dialectica categories we have formally verified\footnote{The complete formalization can be
  found online at
  \url{https://github.com/heades/dialectica-spaces/blob/Lambek/NCDialSets.agda}.} this section in
the proof assistant Agda~\cite{bove2009}.
Dialectica categories arise as constructions over a given monoidal
category.  Suppose $\cat{C}$ is such a category.  Then in complete
generality the objects of the dialectica category over $\cat{C}$ are
triples $(U, X, \alpha)$ where $U$ and $X$ are objects of $\cat{C}$,
and $\alpha : A \mto U \otimes X$ is a subobject of the tensor product
in $\cat{C}$ of $U$ and $X$.  Thus, we can think of $\alpha$ as a
relation over $U \otimes X$.  If we specialize the category $\cat{C}$
to the category of sets and functions, $\Set$, then we obtain what is
called a dialectica space. Dialectica spaces are a useful model of
full intuitionistic linear logic \cite{Hyland:1993}.

Morphisms between objects $(U, X, \alpha)$ and $(V, Y, \beta)$ are
pairs $(f, F)$ where $f : U \mto V$ and $F : Y \mto X$ are morphisms
of $\cat{C}$ such that the pullback condition 
$(U \otimes F)^{-1}(\alpha) \leq (f \otimes Y)^{-1}(\beta)$ holds.
In dialectica spaces this condition becomes
$\forall u \in U.\forall y \in Y. \alpha(u , F(y)) \leq \beta(f(u), y)$.
The latter reveals that we can think of the condition on morphisms as
a weak form of an adjoint condition.  Finally, through some nontrivial reasoning
on this structure we can show that this is indeed a category; for the
details see the formal development.  Dialectica categories are related
to the Chu construction \cite{depaiva2007} and to Lafont and Streicher's category of games $\text{GAME}_{\kappa}$~\cite{lafont1991}.

To some extent the underlying category $\cat{C}$ controls the kind of
structure we can expect in the dialectica category over $\cat{C}$. However,
de Paiva  showed \cite{depaiva2007} that by changing the relations used in
the objects and the order used in the  `adjoint condition' (which also
controls the type of structure) we can obtain a non-symmetric tensor in the dialectica category, if the structure of the underlying category and the structure of the underlying relations are compatible. 
She also showed that one can
abstract the notion of relation out as a parameter in the dialectica
construction, so long as this has enough structure, i.e. so long as you have an algebra (that she called a lineale) to evaluate the relations at. 
We denote this construction by $\mathsf{Dial}_{L}(\cat{C})$ where $L$ is the lineale controlling the relations coming from the monoidal category $\cat{C}$. For example, $\mathsf{Dial}_{\mathsf{2}}(\Set)$ is the
category of usual dialectica spaces of sets over the Heyting (or Boolean) algebra $2$. 

This way we can see dialectica categories as a framework of categorical models of various logics, varying the underlying category $\cat{C}$ as well as the underlying lineale or algebra of relations $L$. Depending on which category we start with and which structure we use for the relations in the construction we will obtain different models for different logics.

The underlying category we will choose here is the category $\Set$,
but the structure we will define our relations over will be a biclosed
poset, defined in the next definition.
\begin{definition}
  \label{def:biclosed-poset}
  Suppose $(M, \leq, \circ, e)$ is an ordered non-commutative monoid.
  If there exists a largest $x \in M$ such that $a \circ x \leq b$ for
  any $a, b \in M$, then we denote $x$ by $a \lto b$ and called it
  the \textbf{left-pseudocomplement} of $a$ w.r.t $b$.  Additionally,
  if there exists a largest $x \in M$ such that $x \circ a \leq b$ for
  any $a, b \in M$, then we denote $x$ by $b \rto a$ and called it
  the \textbf{right-pseudocomplement} of $a$ w.r.t $b$.

  A \textbf{biclosed poset}, $(M, \leq, \circ, e, \lto, \rto)$, is an
  ordered non-commutative monoid, $(M, \leq, \circ, e)$, such that $a
  \lto b$ and $b \rto a$ exist for any $a,b \in M$.
\end{definition}
\noindent
Now using the previous definition we define dialectica Lambek spaces.
\begin{definition}
  \label{def:dialectica-lambek-spaces}
  Suppose $(M, \leq, \circ, e, \lto, \rto)$ is a biclosed poset. Then
  we define the category of \textbf{dialectica Lambek spaces},
  $\mathsf{Dial}_M(\Set)$, as follows:
  \begin{itemize}
  \item[-] objects, or dialectica Lambek spaces, are triples $(U, X,
    \alpha)$ where $U$ and $X$ are sets, and $\alpha : U \times X \mto
    M$ is a generalized relation over $M$, and

  \item[-] maps that are pairs $(f, F) : (U , X, \alpha) \mto (V , Y ,
    \beta)$ where $f : U \mto V$, and $F : Y \mto X$ are functions
    such that the weak adjointness condition
    $\forall u \in U.\forall y \in Y. \alpha(u , F(y)) \leq \beta(f(u), y)$
    holds.
  \end{itemize}
\end{definition}
Notice that the biclosed poset is used here as the target of the
relations in objects, but also as providing the order  relation in the weak adjoint condition on morphisms.  This will allow the structure of the biclosed
poset to lift up into $\Dial{M}{\Set}$.

We will show that $\Dial{M}{\Set}$ is a model of the Lambek Calculus with
modalities.  First, we show that it is a model of the Lambek Calculus
without modalities.  Thus, we must show that $\Dial{M}{\Set}$ is
monoidal biclosed.

\begin{definition}
  \label{def:dial-monoidal-structure}
  Suppose $(U, X, \alpha)$ and $(V, Y, \beta)$ are two objects of
  $\Dial{M}{\Set}$. Then their tensor product is defined as follows:
  \[ \small
  (U, X, \alpha) \otimes (V, Y, \beta) = (U \times V, (V \to X) \times (U \to Y), \alpha \otimes \beta)
  \]
  where $- \to -$ is the function space from $\Set$, and $(\alpha
  \otimes \beta)((u, v), (f, g)) = \alpha(u, f(v)) \circ \beta(g(u), v)$.
\end{definition}

\noindent
The identity of the tensor product just defined is $I = (\mathbbm{1},
\mathbbm{1}, e)$, where $\mathbbm{1}$ is the terminal object in
$\Set$, and $e$ is the unit of the biclosed poset.  It is
straightforward to show that the tensor product is functorial, one can
define the left and right unitors, and the associator for tensor; see
the formalization for the definitions.  In addition, all of the usual
monoidal diagrams hold \cite{depaiva1990}.  Take note of the fact that
this tensor product is indeed non-commutative, because the
non-commutative multiplication of the biclosed poset is used to define
the relation of the tensor product.

The tensor product has two right adjoints making $\Dial{M}{\Set}$
biclosed.
\begin{definition}
  \label{def:dial-is-biclosed}
  Suppose $(U, X, \alpha)$ and $(V, Y, \beta)$ are two objects of
  $\Dial{M}{\Set}$. Then two internal-homs can be defined as follows:
  \[ \small
  \begin{array}{lll}
    (U, X, \alpha) \lto (V, Y, \beta) = ((U \to V) \times (Y \to X), U \times Y, \alpha \lto \beta)\\
    (V, Y, \beta) \rto (U, X, \alpha) = ((U \to V) \times (Y \to X), U \times Y, \alpha \rto \beta)\\
  \end{array}
  \]
\end{definition}
These two definitions are functorial, where the first is contravariant
in the first argument and covariant in the second, but the second
internal-hom is covariant in the first argument and contravariant in
the second.  The relations in the previous two definitions prevent
these two from collapsing into the same object, because of the use of
the left and right pseudocomplement. It is straightforward to show
that the following bijections hold:
\[ \small
\begin{array}{lll}
  \Hom{}{A \otimes B}{C} \cong \Hom{}{B}{A \lto C} & \quad &  \Hom{}{A \otimes B}{C} \cong \Hom{}{A}{C \rto B}
\end{array}
\]
Therefore, $\Dial{M}{\Set}$ is biclosed, and we obtain the following
result.
\begin{theorem}
  \label{theorem:sound-lambek}
  $\Dial{M}{\Set}$ is a sound and complete model for the Lambek
  Calculus $L$ without modalities.
\end{theorem}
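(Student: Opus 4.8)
The plan is to lean on the classical fact that the Lambek Calculus $L$ of Figure~\ref{fig:L} (no structural rules beyond the implicit associativity of $\otimes$ built into contexts) is exactly the internal language of monoidal biclosed categories. Soundness then follows from the structure established just above --- that $\Dial{M}{\Set}$ is monoidal biclosed --- and the real work is on the completeness side, where the task is to exhibit a \emph{single} dialectica Lambek space rich enough to reflect underivability; for that I would use a Lindenbaum (term-model) construction.

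\textbf{Soundness.} Fix any valuation $v$ sending each atom to an object of $\Dial{M}{\Set}$ and extend it to all formulas by $[\![ I ]\!] = I$, $[\![ A\otimes B]\!] = [\![ A]\!]\otimes[\![ B]\!]$, and by interpreting the two Lambek implications via the two internal-homs $\lto,\rto$ of Definition~\ref{def:dial-is-biclosed}, matched so that the Lambek adjunctions line up with the two bijections $\Hom{}{A\otimes B}{C}\cong\Hom{}{B}{A\lto C}$ and $\Hom{}{A\otimes B}{C}\cong\Hom{}{A}{C\rto B}$ displayed just above the statement; a context $\Gamma = A_1,\dots,A_n$ goes to the left-bracketed tensor of the $[\![ A_i]\!]$, and $\cdot$ to $I$. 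Then I would interpret each derivation of $\Gamma\vdash A$ as a morphism $[\![\Gamma]\!]\to[\![ A]\!]$ by recursion on the last rule: $\mathrm{ax}$ is the identity, $\mathrm{cut}$ is composition, $\mathrm{Tr}$/$\mathrm{Tl}$ use functoriality of $\otimes$ and the associator, $\mathrm{Ur}$/$\mathrm{Ul}$ use the unitors, and $\mathrm{IRr}$/$\mathrm{ILr}$ (resp.\ $\mathrm{IRl}$/$\mathrm{ILl}$) are precisely the two currying bijections (resp.\ their composites with the evaluation maps). Coherence for monoidal biclosed categories makes the re-bracketings forced by $\mathrm{Tl}$, $\mathrm{Ul}$ and $\mathrm{cut}$ canonical, so the assignment is well defined and every derivable sequent is inhabited in $\Dial{M}{\Set}$ --- the underlying monoidal biclosed structure being exactly what the Agda development verifies.

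\textbf{Completeness.} For the converse I would build the Lindenbaum model of $L$ and embed it into a dialectica Lambek space. Quotient the formulas of $L$ by $A\equiv B$ iff both $A\vdash B$ and $B\vdash A$ are derivable; write $[A]$ for the class of $A$ and order classes by derivability. With $\circ := \otimes$, $e := [I]$, and the two residuals supplied by (the classes of) the two Lambek implications, this becomes a biclosed poset $M_L$ in the sense of Definition~\ref{def:biclosed-poset}: the implication rules $\mathrm{IRr}$/$\mathrm{IRl}$ and $\mathrm{ILr}$/$\mathrm{ILl}$, together with $\mathrm{cut}$ and $\mathrm{ax}$, say exactly that the implication classes are the largest solutions of $x\circ[A]\leq[B]$ and $[A]\circ x\leq[B]$. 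Now interpret each atom $p$ as the one-point space $(\mathbbm{1},\mathbbm{1},\alpha_p)$ with $\alpha_p(\ast,\ast) = [p]\in M_L$. Unfolding Definitions~\ref{def:dial-monoidal-structure} and~\ref{def:dial-is-biclosed} and using $\mathbbm{1}\times\mathbbm{1}\cong\mathbbm{1}$ and $(\mathbbm{1}\to\mathbbm{1})\cong\mathbbm{1}$, a short induction on $A$ yields $[\![ A]\!]\cong(\mathbbm{1},\mathbbm{1},\alpha_A)$ with $\alpha_A(\ast,\ast) = [A]$, since in each clause the composite relation collapses to $\circ$, $\lto$ or $\rto$ evaluated at the relevant classes. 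Consequently $\Hom{}{[\![ A]\!]}{[\![ B]\!]}$ is non-empty in $\Dial{M_L}{\Set}$ iff the weak adjointness condition reduces to $[A]\leq[B]$, i.e.\ iff $A\vdash B$ is derivable (for $\Gamma\vdash A$ with $\Gamma$ non-empty use $[\![\Gamma]\!]\cong[\![ A_1\otimes\cdots\otimes A_n]\!]$). Hence a sequent valid in every dialectica Lambek space is valid in $\Dial{M_L}{\Set}$ under this valuation, so it is derivable in $L$; with soundness this proves the theorem.

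\textbf{The main obstacle.} Essentially the only non-routine step is checking that $M_L$ genuinely is a biclosed poset: that the residual operations are well defined on $\equiv$-classes and, above all, that they are \emph{largest} solutions of the two inequalities --- this is where the left and right implication rules, used in tandem with $\mathrm{cut}$ and $\mathrm{ax}$, do their work, and where one must keep the two residuals apart so that non-commutativity of $\otimes$ is respected. Everything else is transport of the already-formalized monoidal biclosed structure plus the observation that one-point dialectica spaces compute $M_L$ on the nose. I would also flag that this one-point interpretation identifies all parallel derivations, so the argument yields completeness at the level of provability, matching Lambek's original theorem; a \emph{full}-completeness statement, faithful at the level of proofs, would require a more refined choice of the parametrizing lineale and is not attempted here.
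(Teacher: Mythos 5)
Your proposal is correct, and it is worth separating its two halves. For soundness you do exactly what the paper does: Definitions~\ref{def:dial-monoidal-structure} and~\ref{def:dial-is-biclosed} together with the two adjunction bijections make $\Dial{M}{\Set}$ monoidal biclosed, and the sequent rules of Figure~\ref{fig:L} are then interpreted by the evident structural morphisms (the paper delegates the verification of the coherence diagrams and the adjunctions to the Agda development rather than spelling out the induction on derivations, but the content is the same). For completeness, however, you supply an argument that the paper does not actually give: the text passes directly from ``$\Dial{M}{\Set}$ is biclosed'' to the theorem, leaving the completeness half implicit. Your Lindenbaum construction --- form the biclosed poset $M_L$ of formulas modulo interprovability, observe that the full subcategory of one-point objects $(\mathbbm{1},\mathbbm{1},\alpha)$ of $\Dial{M_L}{\Set}$ is a thin monoidal biclosed category isomorphic to $M_L$ itself, and read off that $\Hom{}{[\![ A]\!]}{[\![ B]\!]}\neq\emptyset$ iff $A\vdash B$ --- is a clean way to discharge it, and the point that the tensor, unit and both internal-homs of one-point spaces collapse to $\circ$, $e$, $\lto$ and $\rto$ evaluated in $M_L$ is exactly the right observation. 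Two caveats you have essentially already flagged yourself: the theorem as literally stated fixes an arbitrary biclosed poset $M$, and completeness in your sense is relative to the \emph{class} of dialectica Lambek spaces (witnessed by the single model $\Dial{M_L}{\Set}$), not to each fixed $M$ --- this is the only tenable reading of the statement, but it is a reading, not something the paper makes explicit; and your construction yields completeness at the level of provability only, whereas the categorical semantics of the paper is proof-relevant, so nothing stronger than your stated conclusion should be claimed. Within those bounds the argument is sound, and the step you single out as the main obstacle (that the residuals of $M_L$ are genuinely largest solutions, via $\Ldrulename{IRr}$/$\Ldrulename{ILr}$ for the universal property and $\Ldrulename{IRl}$/$\Ldrulename{ILl}$ with $\Ldrulename{cut}$ for evaluation) is indeed where the work lives.
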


We now extend $\Dial{M}{\Set}$ with two modalities: the usual
modality, of-course, denoted $!A$, and the exchange modality denoted
$\kappa A$.  However, we must first extended biclosed posets to
include an exchange operation.

\begin{definition}
  \label{def:biclosed-exchange}
  A \textbf{biclosed poset with exchange} is a biclosed poset $(M,
  \leq, \circ, e, \lto, \rto)$ equipped with an unary operation
  $\kappa : M \to M$ satisfying the following:
  \[ \small
  \begin{array}{clr}
    \text{(Compatibility)} & a \leq b \text{ implies } \kappa a \leq \kappa b \text{ for all } a,b,c \in M\\
    \text{(Minimality)} & \kappa a \leq a \text{ for all } a \in M\\
    \text{(Duplication)} & \kappa a \leq \kappa\kappa a \text{ for all } a \in M\\
    \text{(Left Exchange)} & \kappa a \circ b \leq b \circ \kappa a \text{ for all } a, b \in M\\
    \text{(Right Exchange)} & a \circ \kappa b \leq \kappa b \circ a \text{ for all } a, b \in M\\
  \end{array}
  \]
\end{definition}
\noindent
Compatibility results in $\kappa : M \to M$ being a functor in the
biclosed poset, and the remainder of the axioms imply that $\kappa$ is
a comonad extending the biclosed poset with left and right exchange.

We can now define the two modalities in $\Dial{M}{\Set}$ where $M$ is
a biclosed poset with exchange; clearly we know $\Dial{M}{\Set}$ is
also a model of the Lambek Calculus without modalities by
Theorem~\ref{theorem:sound-lambek} because $M$ is a biclosed poset.
\begin{definition}
  \label{def:modalities-dial}
  Suppose $(U, X, \alpha)$ is an object of $\Dial{M}{\Set}$ where $M$
  is a biclosed poset with exchange. Then the \textbf{of-course} and
  \textbf{exchange} modalities can be defined as 
  $! (U, X, \alpha) = (U, U \to X^*, !\alpha)$ and
  $\kappa (U, X, \alpha) = (U, X, \kappa \alpha)$
  where $X^*$ is the free commutative monoid on $X$, $(!\alpha)(u, f)
  = \alpha(u, x_1) \circ \cdots \circ \alpha(u, x_i)$ for $f(u) =
  (x_1, \ldots, x_i)$, and $(\kappa \alpha)(u, x) = \kappa (\alpha(u,
  x))$.
\end{definition}
This definition highlights a fundamental difference between the two
modalities.  The definition of the exchange modality relies on an
extension of biclosed posets with essentially the exchange modality in
the category of posets.  However, the of-course modality is defined by
the structure already present in $\Dial{M}{\Set}$, specifically, the
structure of $\Set$.

Both of the modalities have the structure of a comonad.  That is,
there are monoidal natural transformations $\varepsilon_! : !A \mto
A$, $\varepsilon_\kappa : \kappa A \mto A$, $\delta_! : !A \mto !!A$,
and $\delta_\kappa : \kappa A \mto \kappa\kappa A$ which satisfy the
appropriate diagrams; see the formalization for the full
proofs. Furthermore, these comonads come equipped with arrows $e : !A
\mto I$, $d : !A \mto !A \otimes !A$, $\beta L : \kappa A \otimes B \mto B
\otimes \kappa A$, and $\beta R : A \otimes \kappa B \mto \kappa B
\otimes A$.  Thus, we arrive at the following result.

\begin{theorem}
  \label{theorem:sound-dial-exchange-!}
  Suppose $M$ is a biclosed poset with exchange.  Then
  $\Dial{M}{\Set}$ is a sound and complete model for the Lambek
  Calculi $L_!$, $L_\kappa$, and $L_{!\kappa}$.
\end{theorem}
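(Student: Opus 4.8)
The plan is to establish all three assertions at once by treating $L_{!\kappa}$ as the master calculus: since $\LB$ and $\LE$ arise from $L_{!\kappa}$ by dropping the $\kappa$-rules, respectively the $!$-rules, and $\Dial{M}{\Set}$ still carries both modalities (Definition~\ref{def:modalities-dial}) and already models \textsf{L} (Theorem~\ref{theorem:sound-lambek}), the statements for $\LB$ and $\LE$ fall out of the $L_{!\kappa}$ argument by restriction. For \emph{soundness} I would extend the interpretation behind Theorem~\ref{theorem:sound-lambek}: send $!A$ and $\kappa A$ to the objects of Definition~\ref{def:modalities-dial}, and, by induction on derivations, send a derivation of $\Gamma \vdash t : A$ to a morphism $\Gamma \mto A$ of $\Dial{M}{\Set}$. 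The \textsf{L}-rules are discharged as in Theorem~\ref{theorem:sound-lambek}; for the modal rules I would use the structure recorded immediately above the statement, interpreting \textsc{Bl} and \textsc{El} by the counits $\varepsilon_!,\varepsilon_\kappa$, \textsc{Br} and \textsc{Er} by applying the comultiplication componentwise to the $!$-ed (resp.\ $\kappa$-ed) context, then the lax-monoidal comparison of $!$ (resp.\ $\kappa$), then $!f$ (resp.\ $\kappa f$), \textsc{W} and \textsc{C} by the comonoid maps $e : {!A}\mto I$ and $d : {!A}\mto {!A}\otimes{!A}$, and \textsc{E1}, \textsc{E2} by the exchange transformations $\beta L$ and $\beta R$. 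Each rule is then closed by a coherence diagram; over all rules these reduce, pointwise on the generalized relations, to associativity and monotonicity of $\circ$ together with the five inequalities of Definition~\ref{def:biclosed-exchange} (Compatibility, Minimality, Duplication, Left and Right Exchange), which is exactly what the Agda development checks.

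For \emph{completeness} I would run the term-model construction. Let $\cat{S}$ be the syntactic category of $L_{!\kappa}$: objects are formulas, $\Hom{\cat{S}}{A}{B}$ is the set of proof terms $x : A \vdash t : B$ modulo the congruence generated by reduction and $\alpha$-renaming, with composition given by substitution. The typing rules make $\cat{S}$ monoidal biclosed, turn $!$ into a monoidal comonad whose $!A$ carries the comonoid maps above, and turn $\kappa$ into a monoidal comonad with exchange transformations; moreover, by the Curry--Howard correspondence underlying the type system, $\Gamma \vdash A$ is derivable in $L_{!\kappa}$ exactly when $\Hom{\cat{S}}{\Gamma}{A}$ is inhabited. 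Completeness then reduces to producing one biclosed poset with exchange $M$ and a structure-preserving functor $\cat{S} \mto \Dial{M}{\Set}$ that reflects inhabitation of hom-sets, so that a morphism $\Gamma \mto A$ in $\Dial{M}{\Set}$ forces $\Gamma \vdash A$ back in the calculus; the natural candidate for $M$ is the preorder reflection of $\cat{S}$ (the Lindenbaum biclosed poset of $L_{!\kappa}$, which carries the operations induced by $!$ and $\kappa$), with atoms interpreted as the corresponding singleton spaces.

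The hard part is the $!$ modality, on both sides but especially for completeness. Whereas $\kappa(U,X,\alpha) = (U,X,\kappa\alpha)$ transports the poset operation verbatim---so every $\kappa$-step is a direct lift of Definition~\ref{def:biclosed-exchange}---the semantic $!$ of Definition~\ref{def:modalities-dial} replaces $X$ by the \emph{free commutative monoid} $X^{*}$, a construction internal to $\Set$ rather than to $M$. For soundness one must confirm that this still yields a genuine linear exponential comonad compatible with the \emph{non-commutative} $\otimes$ of $\Dial{M}{\Set}$: it is precisely the commutativity of $X^{*}$ that reconciles the symmetry built into weakening and contraction with the asymmetry of the tensor, and the Seely-style coherences must be re-established in this non-commutative ambient. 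For completeness one must decode a morphism ${!\Gamma}\mto B$ of $\Dial{M}{\Set}$ into a derivation $!\Gamma \vdash B$: from the $X^{*}$-indexed relation one has to read off the finite multiset of copies of each hypothesis the morphism actually consumes, and then reassemble a proof using promotion, dereliction, contraction and weakening in the correct order. I expect essentially all of the genuine effort---and the dependence on the machine-checked proofs---to sit in this decoding and in the non-commutative Seely coherences; the $\kappa$-modality contributes only routine bookkeeping by comparison.
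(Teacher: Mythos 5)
Your plan is correct in outline, but it is worth noting that it is considerably more than the paper itself offers for this theorem: the paper's entire ``proof'' is the paragraph immediately preceding the statement, which records that $!$ and $\kappa$ are monoidal comonads with $\varepsilon$, $\delta$, the comonoid maps $e : {!A} \mto I$ and $d : {!A} \mto {!A} \otimes {!A}$, and the exchange maps $\beta L$, $\beta R$, and then defers all verification to the Agda formalization; no completeness argument appears in the text at all. Your soundness half therefore coincides with the paper's route --- a rule-by-rule interpretation using exactly that structure, with the coherence conditions reducing pointwise to the inequalities of Definition~\ref{def:biclosed-exchange} --- merely written out where the paper points at the formalization. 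Your completeness half (syntactic category, Lindenbaum biclosed poset with exchange, a structure-preserving functor into $\Dial{M}{\Set}$ reflecting inhabitation, and the decoding of $X^*$-indexed morphisms back into promotion/dereliction/contraction derivations) is genuinely new relative to the paper, and it buys an actual justification of the word ``complete,'' which the paper asserts without argument. Two caveats on your version: first, deriving the $\LB$ and $\LE$ cases ``by restriction'' from $L_{!\kappa}$ is automatic only for soundness; for completeness you additionally need conservativity of $L_{!\kappa}$ over the two subcalculi, or else separate term models, in which case the Lindenbaum poset of $\LB$ must still be equipped with some $\kappa$ satisfying Definition~\ref{def:biclosed-exchange} in order to meet the theorem's hypothesis. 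Second, your identification of the non-commutative Seely-style coherences for the free-commutative-monoid $!$ as the genuinely hard point is accurate, and that is precisely the part the paper discharges only by appeal to the machine-checked development rather than by any argument on the page.
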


\section{Type Theory for Lambek Systems}
\label{sec:typed_lambek_calculi}
In this section we introduce typed calculi for each of the logics
discussed so far.  Each type system is based on the term assignment
for  Intuitionistic Linear Logic introduced 
in \cite{benton1993}.
We show that they are all strongly normalizing
and confluent, but we do not give full detailed proofs of each of
these properties, because they are straightforward consequences of the
proofs of strong normalization and confluence for intuitionistic
linear logic.  In fact, we will reference Bierman's thesis often
within this section.  The reader may wish to review Section~3.5 on
page 88 of \cite{Bierman:1994}.

\subsection{The Typed Lambek Calculus: $\lambda\text{L}$}
\label{subsec:the_typed_lambek_calculus_lambda-l}

The first system we cover is the Lambek Calculus without
modalities. This system can be seen as the initial core of each of the
other systems we introduce below, and thus, we will simply extend the
results here to three other systems.  

The syntax for patterns, terms, and contexts are described by the
following grammar:
\vspace{-10px}
\[ \small
\begin{array}{cll}
  \text{(patterns)} & \Lnt{p} :=  -  \mid \Lmv{x} \mid  \mathsf{unit}  \mid  \Lnt{p_{{\mathrm{1}}}}  \otimes  \Lnt{p_{{\mathrm{2}}}} \\
  \text{(terms)}    & \Lnt{t} := \Lmv{x} \mid  \mathsf{unit}  \mid  \Lnt{t_{{\mathrm{1}}}}  \otimes  \Lnt{t_{{\mathrm{2}}}}  \mid  \lambda_l  \Lmv{x} : \Lnt{A} . \Lnt{t}  \mid  \lambda_r  \Lmv{x} : \Lnt{A} . \Lnt{t}  \mid
   \mathsf{app}_l\, \Lnt{t_{{\mathrm{1}}}} \, \Lnt{t_{{\mathrm{2}}}}  \mid \\ &  \mathsf{app}_r\, \Lnt{t_{{\mathrm{1}}}} \, \Lnt{t_{{\mathrm{2}}}}  \mid  \mathsf{let}\, \Lnt{t_{{\mathrm{1}}}} \,\mathsf{be}\, \Lnt{p} \,\mathsf{in}\, \Lnt{t_{{\mathrm{2}}}} \\
  \text{(contexts)} & \Gamma :=  \cdot  \mid \Lmv{x}  \Lsym{:}  \Lnt{A} \mid \Gamma_{{\mathrm{1}}}  \Lsym{,}  \Gamma_{{\mathrm{2}}}\\
\end{array}
\vspace{-10px}
\]
Contexts are sequences of pairs of free variables and types.  Patterns
are only used in the let-expression which is itself used to eliminate
logical connectives within the left rules of L.  All variables in the
pattern of a let-expression are bound.  The remainder of the terms are
straightforward.

The typing rules can be found in the in Figure~\ref{fig:typed-L} and
the reduction rules in Figure~\ref{fig:rewrite-L}. The typing rules
are as one might expect.  The reduction rules were extracted from the
cut-elimination procedure for L.

We denote the reflexive and transitive closure of the $ \redto $ by
$ \redto^* $.  We call a term with no $\beta$-redexes a normal form,
and we denote normal forms by $\Lnt{n}$.  In the interest of space we
omit the congruence rules from the definition of the reduction
relation; we will do this for each calculi introduced throughout this
section.  The other typed calculi we introduce below will be
extensions of $\lambda\text{L}$, thus, we do not reintroduce these
rules each time for readability.

\textbf{Strong normalization.}  It is well known that intuitionistic
linear logic (ILL) is strongly normalizing, for example, see Bierman's
thesis \cite{Bierman:1994} or Benton's beautiful embedding of ILL into
system F \cite{Benton:1995c}.  

It is fairly straightforward to define a reduction preserving
embedding of $\lambda\text{L}$ into ILL.  Intuitionistic linear logic
can be obtained from $\lambda\text{L}$ by replacing the rules
$\Ldrulename{T\_IRl}$, $\Ldrulename{T\_ILl}$, $\Ldrulename{T\_IRr}$,
and $\Ldrulename{T\_ILr}$ with the following two rules:
\[ \small
\begin{array}{cccccccc} 
  \LdruleTXXIl{} & \quad \LdruleTXXIr{}
\end{array}
\]
In addition, contexts are considered multisets, and hence, exchange is
handled implicitly. Then we can reuse the idea of Benton's embeddings
to show type preservation and type reduction.

At this point we define the following embeddings.
\begin{definition}
  \label{def:lambda-L_to_ILL}
  We embed types and terms of $\lambda\text{L}$ into ILL as follows:
  \begin{center}
    \begin{math} \small
      \begin{array}{lll}
        \begin{array}{lccc}
        \text{Types:}\\
        & \begin{array}{rll}
            \begin{array}{rll}
                I  ^{\mathsf{e} }  & = &  I \\
               \Lsym{(}   \Lnt{A}  \otimes  \Lnt{B}   \Lsym{)} ^{\mathsf{e} }  & = &   \Lnt{A} ^{\mathsf{e} }   \otimes    \Lnt{B} ^{\mathsf{e} }   \\
            \end{array}
            &
            \begin{array}{rll}
               \Lsym{(}   \Lnt{A}  \rightharpoonup  \Lnt{B}   \Lsym{)} ^{\mathsf{e} }  & = &   \Lnt{A} ^{\mathsf{e} }   \multimap    \Lnt{B} ^{\mathsf{e} }   \\
               \Lsym{(}   \Lnt{A}  \leftharpoonup  \Lnt{B}   \Lsym{)} ^{\mathsf{e} }  & = &   \Lnt{A} ^{\mathsf{e} }   \multimap    \Lnt{B} ^{\mathsf{e} }   \\                    
            \end{array}
          \end{array}                
      \end{array}
      \\
      \begin{array}{lll}
        \text{Terms:}\\
        & \begin{array}{rll}
             \Lmv{x} ^{\mathsf{e} }  & = & \Lmv{x}\\
              \mathsf{unit}  ^{\mathsf{e} }  & = &  \mathsf{unit} \\
              (   \Lnt{t_{{\mathrm{1}}}}  \otimes  \Lnt{t_{{\mathrm{2}}}}   )  ^{\mathsf{e} }  & = &   \Lnt{t_{{\mathrm{1}}}} ^{\mathsf{e} }   \otimes    \Lnt{t_{{\mathrm{2}}}} ^{\mathsf{e} }   \\
              (   \mathsf{let}\, \Lnt{t_{{\mathrm{1}}}} \,\mathsf{be}\, \Lnt{p} \,\mathsf{in}\, \Lnt{t_{{\mathrm{2}}}}   )  ^{\mathsf{e} }  & = &  \mathsf{let}\,  \Lnt{t_{{\mathrm{1}}}} ^{\mathsf{e} }  \,\mathsf{be}\, \Lnt{p} \,\mathsf{in}\,   \Lnt{t_{{\mathrm{2}}}} ^{\mathsf{e} }   \\
          \end{array}
        & \begin{array}{rll}
              (   \lambda_l  \Lmv{x} : \Lnt{A} . \Lnt{t}   )  ^{\mathsf{e} }  & = &  \lambda  \Lmv{x} : \Lnt{A} .   \Lnt{t} ^{\mathsf{e} }   \\
              (   \lambda_r  \Lmv{x} : \Lnt{A} . \Lnt{t}   )  ^{\mathsf{e} }  & = &  \lambda  \Lmv{x} : \Lnt{A} .   \Lnt{t} ^{\mathsf{e} }   \\
              (   \mathsf{app}_l\, \Lnt{t_{{\mathrm{1}}}} \, \Lnt{t_{{\mathrm{2}}}}   )  ^{\mathsf{e} }  & = &    \Lnt{t_{{\mathrm{1}}}} ^{\mathsf{e} }   \,   \Lnt{t_{{\mathrm{2}}}} ^{\mathsf{e} }   \\
              (   \mathsf{app}_r\, \Lnt{t_{{\mathrm{1}}}} \, \Lnt{t_{{\mathrm{2}}}}   )  ^{\mathsf{e} }  & = &    \Lnt{t_{{\mathrm{1}}}} ^{\mathsf{e} }   \,   \Lnt{t_{{\mathrm{2}}}} ^{\mathsf{e} }   \\
          \end{array}
      \end{array}
      \end{array}
    \end{math}
  \end{center}
  The previous embeddings can be extended to contexts in the
  straightforward way, and to sequents as follows:
  \[  \Lsym{(}   \Gamma  \vdash  \Lnt{t}  :  \Lnt{A}   \Lsym{)} ^{\mathsf{e} }  =   \Gamma ^{\mathsf{e} }   \vdash   \Lnt{t} ^{\mathsf{e} }   :    \Lnt{A} ^{\mathsf{e} }   \]
\end{definition}
\noindent
We can now prove strong normalization using the embedding preserves.
\begin{theorem}[Strong Normalization]
  \label{theorem:strong_normalization}
  \begin{itemize}
  \item If $ \Gamma  \vdash  \Lnt{t}  :  \Lnt{A} $ in $\lambda\text{L}$, then
    $  \Gamma ^{\mathsf{e} }   \vdash   \Lnt{t} ^{\mathsf{e} }   :    \Lnt{A} ^{\mathsf{e} }   $ in ILL.
  \item If $ \Lnt{t_{{\mathrm{1}}}}  \rightsquigarrow  \Lnt{t_{{\mathrm{2}}}} $ in $\lambda\text{L}$, then $  \Lnt{t_{{\mathrm{1}}}} ^{\mathsf{e} }   \rightsquigarrow    \Lnt{t_{{\mathrm{2}}}} ^{\mathsf{e} }   $
    in ILL.
  \item If $ \Gamma  \vdash  \Lnt{t}  :  \Lnt{A} $, then $\Lnt{t}$ is strongly normalizing.
  \end{itemize}
\end{theorem}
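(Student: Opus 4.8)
The plan is to prove the three items in order, treating the first two as the engine for the third: together they say that the translation $(-)^{\mathsf{e}}$ of Definition~\ref{def:lambda-L_to_ILL} is a typing- and reduction-preserving embedding of $\lambda\text{L}$ into ILL, after which strong normalization of $\lambda\text{L}$ follows from the well-known strong normalization of ILL. Two small syntactic lemmas are needed first, each by a routine induction on the term: (i) the embedding commutes with substitution, $([t_1/x]t_2)^{\mathsf{e}} = [t_1^{\mathsf{e}}/x]\,t_2^{\mathsf{e}}$; and (ii) the embedding preserves free variables, $\mathsf{FV}(t^{\mathsf{e}}) = \mathsf{FV}(t)$. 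Lemma~(i) is what makes the cut rule and the $\beta$-rules translate correctly, and lemma~(ii) is what makes the side condition of \Ldrulename{R\_Alpha} survive the translation.

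For the first item I would induct on the derivation of $\Gamma \vdash t : A$ in $\lambda\text{L}$, showing that each rule is matched by a corresponding rule — or a two-line derivation — of ILL applied to the embedded premises, using that $\Gamma^{\mathsf{e}}$ is obtained pointwise and that ILL contexts are multisets, so permuting context entries is free. The cases \Ldrulename{T\_var}, \Ldrulename{T\_Ur}, \Ldrulename{T\_Ul}, \Ldrulename{T\_Tl}, and \Ldrulename{T\_Tr} are literally the corresponding ILL rules, and \Ldrulename{T\_cut} is ILL's cut rule combined with lemma~(i). The only cases with any content are the directed-implication rules: since $(A \rightharpoonup B)^{\mathsf{e}} = (B \leftharpoonup A)^{\mathsf{e}} = A^{\mathsf{e}} \multimap B^{\mathsf{e}}$ and both $\mathsf{app}_l$ and $\mathsf{app}_r$ embed to ordinary ILL application, the four rules \Ldrulename{T\_IRl}, \Ldrulename{T\_ILl}, \Ldrulename{T\_IRr}, \Ldrulename{T\_ILr} collapse pairwise onto ILL's $\multimap$-left rule \Ldrulename{T\_Il} and $\multimap$-right rule \Ldrulename{T\_Ir}, with \Ldrulename{T\_ILr} additionally using implicit exchange in ILL to move the hypothesis $x : A^{\mathsf{e}}$ to the end of the context. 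This establishes the first bullet.

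For the second item I would induct on the derivation of $t_1 \rightsquigarrow t_2$; in the modality-free fragment the relevant rules are \Ldrulename{R\_Alpha}, \Ldrulename{R\_Betal}, \Ldrulename{R\_Betar}, \Ldrulename{R\_BetaU}, \Ldrulename{R\_BetaT1}, \Ldrulename{R\_BetaT2}, \Ldrulename{R\_LetU}, \Ldrulename{R\_NatU}, \Ldrulename{R\_NatT}, together with the (elided) congruence rules. The \Ldrulename{R\_Alpha} case uses lemmas~(i) and~(ii); the two $\beta$-rules \Ldrulename{R\_Betal} and \Ldrulename{R\_Betar} both map, via lemma~(i), to the single ILL $\beta$-step $(\lambda x{:}A.\,t_2^{\mathsf{e}})\,t_1^{\mathsf{e}} \rightsquigarrow [t_1^{\mathsf{e}}/x]\,t_2^{\mathsf{e}}$; the remaining computation rules \Ldrulename{R\_BetaU}, \Ldrulename{R\_BetaT1}, \Ldrulename{R\_BetaT2}, \Ldrulename{R\_LetU}, \Ldrulename{R\_NatU}, \Ldrulename{R\_NatT} translate one-for-one to the $\beta$-reductions and commuting conversions for $I$ and $\otimes$ in Bierman's presentation of ILL \cite{Bierman:1994}; and each congruence rule is discharged by the matching ILL congruence rule (with \Ldrulename{R\_Al1}/\Ldrulename{R\_Ar1} and \Ldrulename{R\_Ll}/\Ldrulename{R\_Lr} again collapsing pairwise). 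The key fact to record here is that every one of these computation and congruence steps maps to a genuine, nontrivial ILL reduction step, so no reduction is annihilated by the translation.

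Finally, for strong normalization: working modulo $\alpha$-equivalence (so that \Ldrulename{R\_Alpha}-steps are not counted), suppose some $t$ with $\Gamma \vdash t : A$ admitted an infinite reduction sequence $t = t_0 \rightsquigarrow t_1 \rightsquigarrow \cdots$. By the first item $\Gamma^{\mathsf{e}} \vdash t^{\mathsf{e}} : A^{\mathsf{e}}$ in ILL, and by the second item together with the nontriviality remark we obtain an infinite reduction sequence $t^{\mathsf{e}} = t_0^{\mathsf{e}} \rightsquigarrow t_1^{\mathsf{e}} \rightsquigarrow \cdots$ in ILL, contradicting strong normalization of ILL (Bierman \cite{Bierman:1994}; see also Benton's embedding into System F \cite{Benton:1995c}). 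Hence $t$ is strongly normalizing. The one place that genuinely requires care — rather than bookkeeping — is the many-to-one collapse of $\lambda\text{L}$ constructors under $(-)^{\mathsf{e}}$: one must check that this collapse never causes a reduction step to vanish (needed for the contradiction argument) nor a typing side condition to fail (needed for the first item). Once lemmas~(i) and~(ii) are in hand this is a finite, rule-by-rule verification, but it is the crux of the argument.
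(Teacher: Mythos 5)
Your proposal is correct and follows the same route as the paper, which proves the first two items by induction on the typing and reduction derivations and then derives the third from strong normalization of ILL. The paper states this in one line; your added bookkeeping (the substitution and free-variable lemmas, working modulo $\alpha$-equivalence so that \Ldrulename{R\_Alpha} steps are not counted, and the check that the pairwise collapse of constructors never annihilates a reduction step) is exactly the detail the paper leaves implicit, and it is accurate.
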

\begin{proof}
  The first two cases hold by straightforward induction on the form of the assumed
  typing or reduction derivation. They then imply the third.
\end{proof}
\begin{figure}[t]
  \small
  \begin{mdframed}
    \begin{mathpar}
      \LdruleTXXvar{} \and
      \LdruleTXXUr{} \and
      \LdruleTXXcut{} \and
      \LdruleTXXUl{} \and
      \LdruleTXXTl{} \and
      \LdruleTXXTr{} \and
      \LdruleTXXIRl{} \and
      \LdruleTXXILl{} \and
      \LdruleTXXIRr{} \and
      \LdruleTXXILr{} 
    \end{mathpar}
  \end{mdframed}
  \caption{Typing Rules for the Typed Lambek Calculus: $\lambda\text{L}$}
  \label{fig:typed-L}
\end{figure}
\begin{figure}[h]
  \small
  \begin{mdframed}
    \begin{mathpar}
      \LdruleRXXBetal{} \and
      \LdruleRXXBetar{} \and
      \LdruleRXXBetaU{} \and
      \LdruleRXXBetaTOne{} \and
      \LdruleRXXBetaTTwo{} \and
      \LdruleRXXNatU{} \and
      \LdruleRXXNatT{} \and
      \LdruleRXXLetU{} \and
    \end{mathpar}
  \end{mdframed}
  \caption{Rewriting Rules for the Lambek Calculus: $\lambda\text{L}$}
  \label{fig:rewrite-L}
\end{figure}
\noindent

\textbf{Confluence.} The Church-Rosser property is well known to hold
for ILL modulo commuting conversions, for example, see Theorem~19 of
\cite{Bierman:1994} on page 96.  Since $\lambda\text{L}$ is
essentially a subsystem of ILL, it is straightforward, albeit
lengthly, to simply redefine Bierman's candidates
and carry out a similar proof as Bierman's (Theorem~19 on page 96 of ibid.).  
\begin{theorem}[Confluence]
  \label{theorem:confluence}
  The reduction relation, $ \redto $, modulo the commuting conversions
  is confluent.
\end{theorem}

\subsection{The Typed Lambek Calculus: $\lambda\text{L}_!$}
\label{subsec:the_typed_lambek_calculus:lambda-l!}
The calculus we introduce in this section is an extension of
$\lambda\text{L}$ with the of-course modality $ !  \Lnt{A} $.  This
extension follows from ILL exactly.  The syntax of types and terms of
$\lambda\text{L}$ are extended as follows:
\[ \small
\begin{array}{cllllll}
  \text{(types)}    & \Lnt{A} & := & \cdots \mid  !  \Lnt{A} \\
  \text{(terms)}    & \Lnt{t} & := & \cdots \mid  \mathsf{copy}\, \Lnt{t'} \,\mathsf{as}\, \Lnt{t_{{\mathrm{1}}}} , \Lnt{t_{{\mathrm{2}}}} \,\mathsf{in}\, \Lnt{t}  \mid  \mathsf{discard}\, \Lnt{t'} \,\mathsf{in}\, \Lnt{t} 
  \mid  \mathsf{promote}_!\, \Lnt{t'} \,\mathsf{for}\, \Lnt{t''} \,\mathsf{in}\, \Lnt{t}  \mid \\ & & &  \mathsf{derelict}_!\, \Lnt{t} \\
\end{array}
\]
The new type and terms are what one might expect, and are the
traditional syntax used for the of-course modality.  We add the
following typing rules to $\lambda\text{L}$:
{\small
\begin{mathpar} 
      \LdruleTXXC{} \and
      \LdruleTXXW{} \and
      \LdruleTXXBr{} \and
      \LdruleTXXBl{} 
\end{mathpar}
}
\begin{figure}
  \footnotesize
  \begin{mdframed}
    \begin{mathpar}
      \LdruleRXXBetaDR{} \and
      \LdruleRXXBetaDI{} \and
      \LdruleRXXBetaC{} \and
      \LdruleRXXNatD{} \and
      \LdruleRXXNatC{} 
    \end{mathpar}
  \end{mdframed}
  \caption{Rewriting Rules for The Typed Lambek Calculus: $\lambda\text{L}_!$}
  \label{fig:rewrite-LB}
\end{figure}
\noindent
Finally, the reduction rules can be found in
Figure~\ref{fig:rewrite-LB}.  The equality used in the
$\Ldrulename{R\_BetaC}$ rule is definitional, meaning, that the rule
simply gives the terms on the right side of the equation the name on
the left side, and nothing more.  This makes the rule easier to read.

\textbf{Strong normalization.} Showing strong normalization for
$\lambda\text{L}_!$ easily follows by a straightforward extension of
the embedding we gave for $\lambda\text{L}$.
\begin{definition}
  \label{def:embed-lambda-L!-in-ILL}
  The following is an extension of the embedding of $\lambda\text{L}$
  into ILL resulting in an embedding of types and terms of
  $\lambda\text{L}_!$ into ILL. First, we define $ \Lsym{(}   !  \Lnt{A}   \Lsym{)} ^{\mathsf{e} }  =  !    \Lnt{A} ^{\mathsf{e} }   $, then the following defines the embedding of terms:
  \begin{center} \small
    \begin{math}
      \begin{array}{rlllllllllllllllllll}        
          (   \mathsf{copy}\, \Lnt{t'} \,\mathsf{as}\, \Lnt{t_{{\mathrm{1}}}} , \Lnt{t_{{\mathrm{2}}}} \,\mathsf{in}\, \Lnt{t}   )  ^{\mathsf{e} }  & = &  \mathsf{copy}\,  \Lnt{t'} ^{\mathsf{e} }  \,\mathsf{as}\,  \Lnt{t_{{\mathrm{1}}}} ^{\mathsf{e} }  ,  \Lnt{t_{{\mathrm{2}}}} ^{\mathsf{e} }  \,\mathsf{in}\,   \Lnt{t} ^{\mathsf{e} }   \\
          (   \mathsf{discard}\, \Lnt{t'} \,\mathsf{in}\, \Lnt{t}   )  ^{\mathsf{e} }  & = &  \mathsf{discard}\,  \Lnt{t'} ^{\mathsf{e} }  \,\mathsf{in}\,   \Lnt{t} ^{\mathsf{e} }   \\
          (   \mathsf{promote}_!\, \Lnt{t'} \,\mathsf{for}\, \Lnt{t''} \,\mathsf{in}\, \Lnt{t}   )  ^{\mathsf{e} }  & = &  \mathsf{promote}_!\,  \Lnt{t'} ^{\mathsf{e} }  \,\mathsf{for}\,  \Lnt{t''} ^{\mathsf{e} }  \,\mathsf{in}\,   \Lnt{t} ^{\mathsf{e} }   \\
          (   \mathsf{derelict}_!\, \Lnt{t}   )  ^{\mathsf{e} }  & = &  \mathsf{derelict}_!\,   \Lnt{t} ^{\mathsf{e} }   \\
      \end{array}
    \end{math}
  \end{center}
\end{definition}

\noindent
Just as before this embedding is type preserving and reduction preserving.
\begin{theorem}[Type and Reduction Preserving Embedding]
  \label{theorem:strong_normalization_lambdaL!}
  \begin{itemize}
  \item  If $ \Gamma  \vdash  \Lnt{t}  :  \Lnt{A} $ in $\lambda\text{L}_!$, then
  $  \Gamma ^{\mathsf{e} }   \vdash   \Lnt{t} ^{\mathsf{e} }   :    \Lnt{A} ^{\mathsf{e} }   $ in ILL.

  \item If $ \Lnt{t_{{\mathrm{1}}}}  \rightsquigarrow  \Lnt{t_{{\mathrm{2}}}} $ in $\lambda\text{L}_!$, then $  \Lnt{t_{{\mathrm{1}}}} ^{\mathsf{e} }   \rightsquigarrow    \Lnt{t_{{\mathrm{2}}}} ^{\mathsf{e} }   $ in ILL.
  \item If $ \Gamma  \vdash  \Lnt{t}  :  \Lnt{A} $, then $\Lnt{t}$ is strongly normalizing.  
  \end{itemize}
\end{theorem}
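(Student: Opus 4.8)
The plan is to proceed exactly as in the proof of Theorem~\ref{theorem:strong_normalization}, treating $\lambda\text{L}_!$ as a conservative extension of $\lambda\text{L}$ and ILL as already containing the of-course modality with its standard Benton--Bierman term calculus. For the first clause I would argue by induction on the derivation of $ \Gamma  \vdash  \Lnt{t}  :  \Lnt{A} $ in $\lambda\text{L}_!$. All cases for the rules shared with $\lambda\text{L}$ are handled verbatim as before (after replacing $\Ldrulename{T\_IRl}$, $\Ldrulename{T\_ILl}$, $\Ldrulename{T\_IRr}$, $\Ldrulename{T\_ILr}$ by the single linear-implication rules $\Ldrulename{T\_Il}$, $\Ldrulename{T\_Ir}$ as described above), so the only new work is the four modal rules $\Ldrulename{T\_C}$, $\Ldrulename{T\_W}$, $\Ldrulename{T\_Br}$, $\Ldrulename{T\_Bl}$. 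Since the embedding sends $ !  \Lnt{A} $ to $!\Lnt{A}^e$, sends $\mathsf{copy}$, $\mathsf{discard}$, $\mathsf{promote}_!$ and $\mathsf{derelict}_!$ homomorphically to their ILL counterparts (Definition~\ref{def:embed-lambda-L!-in-ILL}), and commutes with the context operation $ !  \Gamma  \mapsto {!\Gamma}^e = {!(\Gamma^e)}$, each of these rules maps to exactly the corresponding ILL typing rule; the promotion case uses that the embedding of a context of the shape $\overrightarrow{x}{:}\,!\Gamma$ is again a context all of whose hypotheses are of $!$-type, which is precisely the side condition of ILL promotion.

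Before turning to reductions I would isolate the routine substitution lemma $([t_1/x]\,t_2)^e = [t_1^e/x]\,t_2^e$, proved by induction on $t_2$; this is needed both for the dereliction-left rule $\Ldrulename{T\_Bl}$, whose conclusion is a substitution instance, and for several reduction rules. For the second clause I would induct on the derivation of $ \Lnt{t_{{\mathrm{1}}}}  \rightsquigarrow  \Lnt{t_{{\mathrm{2}}}} $ in $\lambda\text{L}_!$. The congruence rules and the rules inherited from $\lambda\text{L}$ are as in Theorem~\ref{theorem:strong_normalization}; the new principal reductions $\Ldrulename{R\_BetaDR}$, $\Ldrulename{R\_BetaDI}$, $\Ldrulename{R\_BetaC}$ and the commuting conversions $\Ldrulename{R\_NatD}$, $\Ldrulename{R\_NatC}$ each translate, using the substitution lemma and the homomorphic clauses of the embedding, to the matching reduction or commuting conversion of Bierman's calculus for ILL (see Section~3.5 of \cite{Bierman:1994}); in particular the definitional abbreviations $\Lnt{t'_{{\mathrm{1}}}}$, $\Lnt{t''_{{\mathrm{1}}}}$ appearing in $\Ldrulename{R\_BetaC}$ unfold to ILL promotions under $(-)^e$, so the translated redex is literally an ILL $\mathsf{copy}$--$\mathsf{promote}$ redex. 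Crucially, each $\lambda\text{L}_!$ step maps to a non-trivial ILL step, so no reduction is collapsed.

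The third clause then follows formally: any infinite $ \redto $-sequence from a well-typed $\Lnt{t}$ in $\lambda\text{L}_!$ would, by the first two clauses, give a well-typed ILL term $\Lnt{t}^e$ together with an infinite $ \redto $-sequence from it, contradicting strong normalization of ILL (\cite{Bierman:1994,Benton:1995c}); as usual one works modulo $\alpha$-equivalence so that $\Ldrulename{R\_Alpha}$ does not interfere with this argument.

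I expect the only genuinely delicate point to be the bookkeeping around the promotion and contraction rules: checking that the vectorised binding $\overrightarrow{x}{:}\,!\Gamma$ and the simultaneous substitution $[\overrightarrow{t}/\overrightarrow{x}]$ in $\Ldrulename{R\_BetaDR}$ and $\Ldrulename{R\_BetaC}$ are matched on the nose by the ILL rules. This is exactly the correspondence already used for $\lambda\text{L}$ and documented in Bierman's thesis, so no new idea is required; the argument is long but entirely routine.
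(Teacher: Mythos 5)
Your proposal follows essentially the same route as the paper: the paper proves the first two items by straightforward induction on the typing and reduction derivations and then derives strong normalization from strong normalization of ILL, exactly as you do. Your additional observations (the substitution lemma, the matching of the vectorised promotion/contraction bookkeeping, and the fact that no reduction step is collapsed by the embedding) are correct and merely make explicit details the paper leaves implicit.
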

\begin{proof}
  The first two cases hold by straightforward induction on the form of the assumed
  typing or reduction derivation.  They then imply the third.
\end{proof}
\noindent

\textbf{Confluence.} The Church-Rosser property also holds for
$\lambda\text{L}_!$, and can be shown by straightforwardly applying a
slightly modified version of Bierman's proof \cite{Bierman:1994} just
as we did for $\lambda\text{L}$.  Thus, we have the following:
\begin{theorem}[Confluence]
  \label{theorem:confluence}
  The reduction relation, $ \redto $, modulo the commuting conversions
  is confluent.
\end{theorem}

\subsection{The Typed Lambek Calculus: $\lambda\text{L}_\kappa$}
\label{subsec:the_typed_lambek_calculus:lambda-l-kappa}

The next calculus we introduce is also an extension of
$\lambda\text{L}$ with a modality that adds exchange to
$\lambda\text{L}_\kappa$ denoted $ \kappa  \Lnt{A} $.  It is perhaps the most
novel of the calculi we have introduced.

The syntax of types and terms of $\lambda\text{L}$ are extended as
follows:
\[ \small
\begin{array}{cllllll}
  \text{(types)}    & \Lnt{A} & := & \cdots \mid  \kappa  \Lnt{A} \\
  \text{(terms)}    & \Lnt{t} & := & \cdots \mid  \mathsf{exchange_l}\, \Lnt{t_{{\mathrm{1}}}} , \Lnt{t_{{\mathrm{2}}}} \,\mathsf{with}\, \Lmv{x} , \Lmv{y} \,\mathsf{in}\, \Lnt{t_{{\mathrm{3}}}}  \mid  \mathsf{exchange_r}\, \Lnt{t_{{\mathrm{1}}}} , \Lnt{t_{{\mathrm{2}}}} \,\mathsf{with}\, \Lmv{x} , \Lmv{y} \,\mathsf{in}\, \Lnt{t_{{\mathrm{3}}}}  \mid 
  \\ & & &  \mathsf{promote}_\kappa\, \Lnt{t'} \,\mathsf{for}\, \Lnt{t''} \,\mathsf{in}\, \Lnt{t}  \mid  \mathsf{derelict}_\kappa\, \Lnt{t} \\
\end{array}
\]

The syntax for types has been extended to include the exchange
modality, and the syntax of terms follow suit.  The terms $ \mathsf{exchange_l}\, \Lnt{t_{{\mathrm{1}}}} , \Lnt{t_{{\mathrm{2}}}} \,\mathsf{with}\, \Lmv{x} , \Lmv{y} \,\mathsf{in}\, \Lnt{t_{{\mathrm{3}}}} $ and $ \mathsf{exchange_r}\, \Lnt{t_{{\mathrm{1}}}} , \Lnt{t_{{\mathrm{2}}}} \,\mathsf{with}\, \Lmv{x} , \Lmv{y} \,\mathsf{in}\, \Lnt{t_{{\mathrm{3}}}} $
are used to explicitly track uses of exchange throughout proofs.  

We add the following typing rules to $\lambda\text{L}$:
{\small
\begin{mathpar} 
  \LdruleTXXEOne{} \and
  \LdruleTXXETwo{} \and
  \LdruleTXXEr{} \and
  \LdruleTXXEl{} 
\end{mathpar}
}
\begin{figure}[h]
  \small
  \begin{mdframed}
    \begin{mathpar}      
      \LdruleRXXBetaEDR{} \and
      \LdruleRXXNatEl{} \and
      \LdruleRXXNatEr{} 
    \end{mathpar}
  \end{mdframed}
  \caption{Rewriting Rules for The Typed Lambek Calculus: $\lambda\text{L}_\kappa$}
  \label{fig:rewrite-LE}
\end{figure}
\noindent
The reduction rules are in Figure~\ref{fig:rewrite-LE}, and are vary
similar to the rules from $\lambda\text{L}_!$.

\textbf{Strong normalization.}  Similarly, we show that we can embed
$\lambda\text{L}_\kappa$ into ILL, but the embedding is a bit more
interesting.

\begin{definition}
  \label{def:embed-lambda-L!-in-ILL}
  The following is an extension of the embedding of $\lambda\text{L}$
  into ILL resulting in an embedding of types and terms of
  $\lambda\text{L}_\kappa$ into ILL.  First, we define $ \Lsym{(}   \kappa  \Lnt{A}   \Lsym{)} ^{\mathsf{e} }  =
   !    \Lnt{A} ^{\mathsf{e} }   $, and then the following defines the embedding of
  terms:
  \begin{center} \small
    \begin{math}
      \begin{array}{rlllllllllllllllllll}        
          (   \mathsf{exchange_l}\, \Lnt{t_{{\mathrm{1}}}} , \Lnt{t_{{\mathrm{2}}}} \,\mathsf{with}\, \Lmv{x} , \Lmv{y} \,\mathsf{in}\, \Lnt{t_{{\mathrm{3}}}}   )  ^{\mathsf{e} }  & = & \Lsym{[}   \Lnt{t_{{\mathrm{2}}}} ^{\mathsf{e} }   \Lsym{/}  \Lmv{x}  \Lsym{]}  \Lsym{[}   \Lnt{t_{{\mathrm{1}}}} ^{\mathsf{e} }   \Lsym{/}  \Lmv{y}  \Lsym{]}    \Lnt{t_{{\mathrm{3}}}} ^{\mathsf{e} }  \\
          (   \mathsf{exchange_r}\, \Lnt{t_{{\mathrm{1}}}} , \Lnt{t_{{\mathrm{2}}}} \,\mathsf{with}\, \Lmv{x} , \Lmv{y} \,\mathsf{in}\, \Lnt{t_{{\mathrm{3}}}}   )  ^{\mathsf{e} }  & = & \Lsym{[}   \Lnt{t_{{\mathrm{2}}}} ^{\mathsf{e} }   \Lsym{/}  \Lmv{x}  \Lsym{]}  \Lsym{[}   \Lnt{t_{{\mathrm{1}}}} ^{\mathsf{e} }   \Lsym{/}  \Lmv{y}  \Lsym{]}    \Lnt{t_{{\mathrm{3}}}} ^{\mathsf{e} }  \\
          (   \mathsf{promote}_\kappa\, \Lnt{t'} \,\mathsf{for}\, \Lnt{t''} \,\mathsf{in}\, \Lnt{t}   )  ^{\mathsf{e} }  & = &  \mathsf{promote}_!\,  \Lnt{t'} ^{\mathsf{e} }  \,\mathsf{for}\,  \Lnt{t''} ^{\mathsf{e} }  \,\mathsf{in}\,   \Lnt{t} ^{\mathsf{e} }   \\
          (   \mathsf{derelict}_\kappa\, \Lnt{t}   )  ^{\mathsf{e} }  & = &  \mathsf{derelict}_!\,   \Lnt{t} ^{\mathsf{e} }   \\
      \end{array}
    \end{math}
  \end{center}
\end{definition}

The embedding translates the exchange modality into the of-course
modality of ILL.  We do this so as to preserve the comonadic structure
of the exchange modality.  One might think that we could simply
translate the exchange modality to the identity, but as Benton showed
\cite{Benton:1995c}, this would result in an embedding that does not
preserve reductions.  Furthermore, the left and right exchange terms
are translated away completely, but this works because ILL contains
exchange in general, and hence, does not need to be tracked
explicitly.  We now have strong normalization and confluence.

\begin{theorem}[Strong Normalization]
  \label{theorem:type_preserving_embedding_lambdaLk}
  \begin{itemize}
  \item If $ \Gamma  \vdash  \Lnt{t}  :  \Lnt{A} $ in $\lambda\text{L}_!$, then
    $  \Gamma ^{\mathsf{e} }   \vdash   \Lnt{t} ^{\mathsf{e} }   :    \Lnt{A} ^{\mathsf{e} }   $ in ILL.

  \item If $ \Lnt{t_{{\mathrm{1}}}}  \rightsquigarrow  \Lnt{t_{{\mathrm{2}}}} $ in $\lambda\text{L}_!$, then $  \Lnt{t_{{\mathrm{1}}}} ^{\mathsf{e} }   \rightsquigarrow    \Lnt{t_{{\mathrm{2}}}} ^{\mathsf{e} }   $
    in ILL.

  \item If $ \Gamma  \vdash  \Lnt{t}  :  \Lnt{A} $, then $\Lnt{t}$ is strongly normalizing.

  \item The reduction relation, $ \redto $, modulo the commuting
    conversions is confluent.
  \end{itemize}
\end{theorem}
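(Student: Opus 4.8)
The plan is to prove all four clauses through the embedding $(-)^{\mathsf{e}}$ of $\lambda\text{L}_\kappa$ into ILL just defined, transporting the known metatheory of ILL (Bierman~\cite{Bierman:1994}) back along it, exactly as was done for $\lambda\text{L}$ and $\lambda\text{L}_!$ in Theorems~\ref{theorem:strong_normalization} and~\ref{theorem:strong_normalization_lambdaL!}. For the first clause I would induct on the derivation of $ \Gamma  \vdash  \Lnt{t}  :  \Lnt{A} $. All cases inherited from $\lambda\text{L}$ and the of-course fragment are discharged exactly as in those earlier proofs, so the only work is in the four $\kappa$-rules. Under the clause $(\kappa\Lnt{A})^{\mathsf{e}} = \,!\,\Lnt{A}^{\mathsf{e}}$ and the translations of $\mathsf{promote}_\kappa$ and $\mathsf{derelict}_\kappa$, the rules $\Ldrulename{T\_Er}$ and $\Ldrulename{T\_El}$ map to genuine instances of ILL promotion and dereliction on the translated premise, so the induction goes through. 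For the exchange rules $\Ldrulename{T\_E1}$ and $\Ldrulename{T\_E2}$ the translated subject is an iterated substitution $[\Lmv{x_{{\mathrm{2}}}}/\Lmv{x_{{\mathrm{1}}}}][\Lmv{y_{{\mathrm{2}}}}/\Lmv{y_{{\mathrm{1}}}}]\Lnt{t}^{\mathsf{e}}$; since ILL contexts are multisets and exchange is admissible, the reordering of the two variables in the conclusion is invisible to the ILL judgement and the substitution merely renames them, so the translated premise still types the translated conclusion.

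For the second clause I would induct on the derivation of $ \Lnt{t_{{\mathrm{1}}}}  \rightsquigarrow  \Lnt{t_{{\mathrm{2}}}} $. The congruence rules and the principal reductions carried over from $\lambda\text{L}$ and $\lambda\text{L}_!$ map to the corresponding ILL reductions as before. The one genuinely new principal reduction, $\Ldrulename{R\_BetaEDR}$, has left-hand image $\mathsf{derelict}_!\,(\mathsf{promote}_!\,\overrightarrow{\Lnt{t}^{\mathsf{e}}}\,\mathsf{for}\,\overrightarrow{\Lmv{x}}\,\mathsf{in}\,\Lnt{t_{{\mathrm{1}}}}^{\mathsf{e}})$, which contracts in ILL by $\Ldrulename{R\_BetaDR}$ to $[\overrightarrow{\Lnt{t}^{\mathsf{e}}}/\overrightarrow{\Lmv{x}}]\Lnt{t_{{\mathrm{1}}}}^{\mathsf{e}}$, and this is the image of the right-hand side because $(-)^{\mathsf{e}}$ commutes with substitution. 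The delicate cases are the exchange commuting conversions $\Ldrulename{R\_NatEl}$ and $\Ldrulename{R\_NatEr}$: since $\mathsf{exchange}$-terms are erased into substitutions, the substitution lemma shows both sides have the \emph{same} image under $(-)^{\mathsf{e}}$. Thus these steps are sent to zero ILL steps, so the clause is correct in the form $\Lnt{t_{{\mathrm{1}}}}^{\mathsf{e}} \redto^* \Lnt{t_{{\mathrm{2}}}}^{\mathsf{e}}$, which is all that the remaining clauses require.

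The third clause is where the main obstacle lies, precisely because of this collapse: a reduction sequence consisting of infinitely many $\Ldrulename{R\_NatEl}$/$\Ldrulename{R\_NatEr}$ steps is invisible to the embedding, so strong normalization of ILL does not transport naively. To close the gap I would observe that the exchange commuting conversions terminate in isolation --- each strictly decreases a standard commuting-conversion measure $m$ --- and then package the two phenomena in the lexicographic measure $(\nu(\Lnt{t}^{\mathsf{e}}),\, m(\Lnt{t}))$, where $\nu$ is the length of the longest ILL reduction from $\Lnt{t}^{\mathsf{e}}$ (finite by strong normalization of ILL). Every step that is not an exchange commuting conversion maps to at least one ILL step and hence strictly decreases $\nu$; every exchange commuting conversion fixes $\nu$ (equal images) and strictly decreases $m$. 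The pair therefore strictly decreases on every $\lambda\text{L}_\kappa$ reduction step, so $\Lnt{t}$ is strongly normalizing.

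Finally, for confluence I would exploit the strong normalization just obtained and appeal to Newman's Lemma, reducing the claim to local confluence modulo the commuting conversions. This is a finite critical-pair analysis: the pairs inherited from $\lambda\text{L}_!$ close exactly as in Bierman's candidate-based proof (Theorem~19, p.~96 of~\cite{Bierman:1994}), and the only new overlaps involve $\Ldrulename{R\_BetaEDR}$ against the exchange commuting conversions and the congruence rules. Because $\kappa$ interacts only with its own promotion and dereliction, these overlaps are few, and each closes by a direct calculation whose equational content is exactly the substitution identity already extracted for $\Ldrulename{R\_NatEl}$/$\Ldrulename{R\_NatEr}$ above.
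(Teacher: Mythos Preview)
Your overall strategy---embed into ILL via $(-)^{\mathsf{e}}$, prove type and reduction preservation by induction on derivations, and read strong normalization and confluence back along the embedding---is exactly the paper's. The paper's own proof is three lines: the first two clauses ``hold by straightforward induction on the form of the assumed typing or reductions derivation,'' and ``they then imply the third case''; the confluence clause is not argued separately there.

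Where you differ is in rigour, not in route. You correctly observe that the exchange commuting conversions $\Ldrulename{R\_NatEl}$ and $\Ldrulename{R\_NatEr}$ have \emph{equal} images under $(-)^{\mathsf{e}}$, so clause two holds only in the form $\Lnt{t_{{\mathrm{1}}}}^{\mathsf{e}} \redto^* \Lnt{t_{{\mathrm{2}}}}^{\mathsf{e}}$, and hence that clause three does not follow from ILL's strong normalization by the naive argument alone. Your lexicographic fix (ILL reduction length of the image, then a termination measure for the collapsed commuting conversions) is the standard remedy and closes the gap. The paper's proof glosses over this point entirely, so your treatment is a genuine refinement rather than a deviation.
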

\begin{proof}
  The first two cases hold by straightforward induction on the form of
  the assumed typing or reductions derivation.  They then imply the
  third case.
\end{proof}

\subsection{The Typed Lambek Calculus: $\lambda\text{L}_{!\kappa}$}
\label{subsec:the_typed_lambek_calculus:lambda-l-!kappa}
If we combine all three of the previous typed Lambek Calculi, then we
obtain the typed Lambek Calculus $\lambda\text{L}_{!\kappa}$.  The
main characteristics of this system are that it provides the benefits
of the non-symmetric adjoint structure of the Lambek Calculus with the
ability of having exchange, and the of-course modality, but both are
carefully tracked within the proofs.

Strong normalization for this calculus can be proved similarly to the
previous calculi by simply merging the embeddings together.  Thus,
both modalities of $\lambda\text{L}_{!\kappa}$ would merge into the
of-course modality of ILL.  The Church-Rosser property also holds for
$\lambda\text{L}_{!\kappa}$ by extending the proof of confluence for
ILL by Bierman \cite{Bierman:1994} just as we did for the other
systems.  Thus, we have the following results.

\begin{theorem}[Strong Normalization]
  \label{theorem:strong_normalization_lambdaL!k}
  If $ \Gamma  \vdash  \Lnt{t}  :  \Lnt{A} $, then $\Lnt{t}$ is strongly normalizing.
\end{theorem}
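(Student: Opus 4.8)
The plan is to follow verbatim the strategy of the three preceding subsections: construct a type- and reduction-preserving embedding $(-)^{\mathsf{e}}$ of $\lambda\text{L}_{!\kappa}$ into ILL and then invoke strong normalization of ILL. Concretely, I would merge the two modal embeddings into a single one, extending the base clauses of Definition~\ref{def:lambda-L_to_ILL} by $ \Lsym{(}   !  \Lnt{A}   \Lsym{)} ^{\mathsf{e} }  =  !    \Lnt{A} ^{\mathsf{e} }  $ and $ \Lsym{(}   \kappa  \Lnt{A}   \Lsym{)} ^{\mathsf{e} }  =  !    \Lnt{A} ^{\mathsf{e} }  $, and taking the union of the term clauses used for $\lambda\text{L}_!$ and for $\lambda\text{L}_\kappa$: both $\mathsf{promote}$/$\mathsf{derelict}$ pairs map to ILL's $\mathsf{promote}_!$/$\mathsf{derelict}_!$, while the two $\mathsf{exchange}$ binders are erased into a pair of nested substitutions exactly as before. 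This is coherent precisely because $ \Lsym{(}   \kappa  \Lnt{A}   \Lsym{)} ^{\mathsf{e} }  =  \Lsym{(}   !  \Lnt{A}   \Lsym{)} ^{\mathsf{e} } $, so the target logic never has to tell the two modalities apart.

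Next I would prove, by a single induction on derivations, that $ \Gamma  \vdash  \Lnt{t}  :  \Lnt{A} $ in $\lambda\text{L}_{!\kappa}$ implies $  \Gamma ^{\mathsf{e} }   \vdash   \Lnt{t} ^{\mathsf{e} }   :    \Lnt{A} ^{\mathsf{e} }  $ in ILL, and that $ \Lnt{t_{{\mathrm{1}}}}  \rightsquigarrow  \Lnt{t_{{\mathrm{2}}}} $ implies $  \Lnt{t_{{\mathrm{1}}}} ^{\mathsf{e} }   \redto^*    \Lnt{t_{{\mathrm{2}}}} ^{\mathsf{e} }  $. Every typing case is literally one of the cases already discharged in Theorem~\ref{theorem:strong_normalization_lambdaL!} or Theorem~\ref{theorem:type_preserving_embedding_lambdaLk}, since the $!$-rules and the $\kappa$-rules do not interact and the sublanguages are disjoint except for the common core $\lambda\text{L}$. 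For reductions, the $\beta$-style rules (including $\Ldrulename{R\_BetaEDR}$) each yield one or more genuine ILL steps, the commuting conversions $\Ldrulename{R\_NatU}$, $\Ldrulename{R\_NatT}$, $\Ldrulename{R\_NatD}$, $\Ldrulename{R\_NatC}$ map to ILL commuting conversions, and the exchange-naturality rules $\Ldrulename{R\_NatEl}$ and $\Ldrulename{R\_NatEr}$ map to syntactic identities (this is why $\redto^*$, not $\redto$, is the right statement). The only ingredient genuinely shared across all cases is the substitution lemma $ \Lsym{(}  \Lsym{[}  \Lnt{t_{{\mathrm{1}}}}  \Lsym{/}  \Lmv{x}  \Lsym{]}  \Lnt{t_{{\mathrm{2}}}}   \Lsym{)} ^{\mathsf{e} }  = \Lsym{[}   \Lnt{t_{{\mathrm{1}}}} ^{\mathsf{e} }   \Lsym{/}  \Lmv{x}  \Lsym{]}   \Lnt{t_{{\mathrm{2}}}} ^{\mathsf{e} } $, proved by an easy induction on $\Lnt{t_{{\mathrm{2}}}}$ (using that the pattern variables bound in an $\mathsf{exchange}$ term are not free in the outer term).

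Strong normalization then follows as in the earlier proofs: ILL is strongly normalizing \cite{Bierman:1994,Benton:1995c}, so to each well-typed $\Lnt{t}$ I would assign the lexicographic pair whose first component is the length of a longest ILL reduction from $ \Lnt{t} ^{\mathsf{e} } $ and whose second component is a structural size that strictly decreases under the exchange-naturality rules; every $ \redto $-step of $\lambda\text{L}_{!\kappa}$ strictly decreases this pair (a $\beta$-step or an ILL-commuting-conversion step drops the first component, and $\Ldrulename{R\_NatEl}$, $\Ldrulename{R\_NatEr}$ keep it fixed while dropping the second), so no infinite reduction sequence can exist. The step I expect to require the most care is this last coordinate together with reduction preservation on the $\kappa$-fragment: because $\mathsf{exchange_l}$ and $\mathsf{exchange_r}$ are erased, one has to confirm that every redex firing inside or across an exchange binder is still matched by something sound in ILL, and that the erased naturality rules cannot loop on their own — which is exactly where the substitution lemma, the fact that ILL has unrestricted exchange, and the structural measure are used. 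Once that is in place, merging with the already-verified $!$-fragment is purely routine bookkeeping, which is why the result is stated without the spelled-out induction.
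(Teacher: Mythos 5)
Your proposal follows essentially the same route as the paper: merge the two embeddings so that both $!$ and $\kappa$ are sent to ILL's of-course modality, erase the exchange binders into iterated substitutions, and transport strong normalization of ILL back along a type- and reduction-preserving translation. The one point where you go beyond the paper's sketch is worth keeping: since \textsc{R\_NatEl} and \textsc{R\_NatEr} are collapsed to syntactic identities by the erasure of exchange, the bare ``infinite source reduction yields infinite target reduction'' argument does not suffice, and your auxiliary structural measure (making the pair decrease lexicographically) is exactly the extra ingredient needed to close that gap.
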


\begin{theorem}[Confluence]
  \label{theorem:confluence-lambdaL!k}
  The reduction relation, $ \redto $, modulo the commuting conversions
  is confluent.
\end{theorem}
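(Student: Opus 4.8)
The plan is to prove the Church--Rosser property directly for $\lambda\text{L}_{!\kappa}$ by redefining Bierman's candidates --- a Tait--Martin-L\"of style parallel reduction --- and replaying the argument of Theorem~19 on page~96 of \cite{Bierman:1994}, exactly as was sketched for $\lambda\text{L}$, $\lambda\text{L}_!$ and $\lambda\text{L}_\kappa$. It is worth stressing that confluence cannot simply be transported along the merged embedding into ILL: although that embedding is reduction preserving (by merging the reduction-preserving embeddings of the three previous subsections), it is not known to reflect reductions, so the confluence proof must be carried out over $\lambda\text{L}_{!\kappa}$ itself. The key observation that makes this feasible is that $\lambda\text{L}_{!\kappa}$ is essentially a subsystem of ILL once the two modalities are identified with $!$, and that the reduction rules for $\kappa$ in Figure~\ref{fig:rewrite-LE} are a structural copy of those for $!$ in Figure~\ref{fig:rewrite-LB}, so Bierman's diagrams transfer almost verbatim.

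Concretely I would proceed in four steps. First, define a parallel reduction $\Rrightarrow$ on terms of $\lambda\text{L}_{!\kappa}$, closed under every term constructor and under all the $\beta$-rules of Figures~\ref{fig:rewrite-L}, \ref{fig:rewrite-LB} and~\ref{fig:rewrite-LE}, and work throughout modulo the congruence generated by the commuting conversions, namely the \textsc{R\_NatU}, \textsc{R\_NatT}, \textsc{R\_NatD}, \textsc{R\_NatC}, \textsc{R\_NatEl} and \textsc{R\_NatEr} rules. Second, verify the sandwich $\redto\ \subseteq\ \Rrightarrow\ \subseteq\ \redto^*$, so that $\Rrightarrow$ and $\redto$ have the same transitive closure and confluence of $\redto^*$ reduces to the diamond property for $\Rrightarrow$. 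Third, prove a substitution lemma stating that $\Rrightarrow$ is stable under the single substitutions $[\Lnt{t}/\Lmv{x}]$ and under the simultaneous substitutions $[\overrightarrow{\Lnt{t}}/\overrightarrow{\Lmv{x}}]$ produced by the promotion, dereliction and copy rules. Fourth, establish the diamond property for $\Rrightarrow$ modulo commuting conversions by the usual case analysis on the relative positions of the two contracted redexes, and conclude that $\redto$ is confluent modulo the commuting conversions.

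The main obstacle is the number of critical pairs created by stacking the exchange modality on top of the of-course modality. The genuinely new cases are: (i)~the $\beta$-redex \textsc{R\_BetaEDR}, where $\mathsf{derelict}_\kappa$ meets a $\mathsf{promote}_\kappa$, overlapping with reductions inside the promoted subterms, which closes by the same diagram as \textsc{R\_BetaDR} does for $!$; (ii)~the commuting conversions \textsc{R\_NatEl} and \textsc{R\_NatEr}, which push a substitution past an $\mathsf{exchange_l}$ or $\mathsf{exchange_r}$ construct and must be shown to commute with one another, with the remaining \textsc{R\_Nat} rules, and with the $\beta$-rules, exactly as \textsc{R\_NatC} and \textsc{R\_NatD} do; and (iii)~the mixed overlaps in which a $\mathsf{promote}_\kappa$ term occupies the principal position of a $\mathsf{copy}$, $\mathsf{discard}$ or $\mathsf{derelict}_!$ redex, together with the dual situations, which are handled exactly as the pure-$!$ cases since $\kappa$-promotion differs from $!$-promotion only in carrying a $\kappa$-context rather than a $!$-context. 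Because every one of these mirrors a diagram already verified in the $!$-only development, closure of $\Rrightarrow$ under the diamond property goes through; what remains is the long but mechanical bookkeeping, for which the reduction-preserving embedding into ILL serves as a useful sanity check.
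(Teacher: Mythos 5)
Your proposal takes essentially the same route as the paper: the paper likewise does not transport confluence through the merged embedding into ILL but instead asserts that Bierman's direct Church--Rosser argument (Theorem~19, p.~96 of \cite{Bierman:1994}) extends to $\lambda\text{L}_{!\kappa}$ ``just as we did for the other systems,'' with the $\kappa$-rules mirroring the $!$-rules. Your sketch simply supplies more of the parallel-reduction and critical-pair bookkeeping than the paper, which gives no further detail beyond that remark.
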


\section{Acknowledgments}
\label{sec:acknowledgments}
The authors would like to thank the anonymous reviewers for their
feedback which did make this a better paper.  The second author was
partially supported by the NSF grant \#1565557.

\section{Conclusions}
We have recalled how to use biclosed posets and sets to construct
dialectica-like models of the Lambek Calculus. This construction is
admittedly not the easiest one, which is the reason why we use
automated tools to verify our definitions, but it has one striking
advantage. It shows how to introduce modalities to recover the
expressive power of intuitionistic (and a posteriori classical)
propositional logic to the system. We know of no other construction of
models of Lambek Calculus that does model modalities, not using their
syntactic properties. (The traditional view in algebraic semantics is
to consider idempotent operators for modalities like !). The
categorical semantics here has been described before
\cite{depaiva1991}, but the syntactic treatment of the lambda-calculi,
in the style of \cite{benton1993} had not been done and there were
doubts about its validity, given the results of Jay \cite{jay1991}. We
are glad to put this on a firm footing, using another one of Benton's
ideas: his embedding of intuitionistic linear logic into system
F. Finally, we envisage more work, along the lines of algebraic proof
theory, for modalities and non-symmetric type systems.



\end{document}